\newcommand{\revred}[1]{#1}
\newcommand{\statls}{\nonumber \\ & \qquad}	%Stationary Line Split for both one and two column versions
\newcommand{\ls}{}	%Line Split inactive for one column version
\newcommand{\fls}[1]{#1}	%First Line Split inactive for one column version % Used for very few places
\newcommand{\hqquad}{}
\newcommand\numberthis{\addtocounter{equation}{1}\tag{\theequation}}
\newcommand{\sym}[2]{#1 & = \text{#2} \\}
\newcommand{\cm}{C_M}
\newcommand{\cp}{C_{\Phi}}
\begin{document}
%\title{}
%\title{Possibly very long title}\relax[Short title]
\title{Non-asymptotic Error Bounds For Constant Stepsize Stochastic Approximation For Tracking Mobile Agents}
\author{Bhumesh Kumar, Vivek Borkar, Akhil Shetty
%}\institute{
\thanks{VB is and BK, AS were with the Department of Electrical Engineering, IIT Bombay, Powai, Mumbai, Maharashtra 400076, India. BK is now with the Department of Electrical and Computer Engineering, University of Wisconsin at Madison, Madison, WI 53706, USA. AS is now with the Department of Electrical  Engineering and Computer Science, University of California at Berkeley, Cory Hall, Hearst Avenue, Berkeley, CA 94720, USA. Email: bkumar@wisc.edu, borkar.vs@gmail.com, shetty.akhil@berkeley.edu. Work of VB was supported in part by a J.\ C.\ Bose Fellowship,  CEFIPRA grant No. IFC/DST-Inria-2016-01/448 ``Machine Learning for Network Analytics''  and a grant for `\textit{Approximation for high dimensional optimization and control problems}' from the Department of Science and Technology, Government of India. }
}
%\institute{BK, VB are and AS was with the Department of Electrical Engineering, IIT Bombay, Powai, Mumbai  400076, India. AS is now with the Department of Electrical  Engineering and Computer Science, University of California, Berkeley. Email:  13d070060@iitb.ac.in, borkar.vs@gmail.com, shetty.akhil@berkeley.edu. Work of VSB was supported in part by a J.\ C.\ Bose Fellowship,  CEFIPRA grant No. IFC/DST-Inria-2016-01/448 ``Machine Learning for Network Analytics''  and a grant for `\textit{Approximation for high dimensional optimization and control problems}' from the Department of Science and Technology, Government of India.}
%\date{\vspace{-5ex}}
%
%%author names and affiliations
%%use a multiple column layout for up to three different
%%affiliations
%\author{\IEEEauthorblockN{Bhumesh Kumar, Vivek  Borkar, Akhil Shetty \\}
%%\IEEEauthorblockA{Department of Electrical Engineering, IIT Bombay\\
%%Email: borkar.vs@gmail.com, 13d070060@iitb.ac.in}
%}
%\titlerunning
%\string\titlerunning\space
\maketitle
%Abstract
%%%%%%%%%%%%%%%%%%%%%%%%%%%%%%%%%%%%%%%%%%%%%%%%%%%%%%%%
\begin{abstract}
This work revisits the constant stepsize stochastic approximation algorithm for tracking a slowly moving target and obtains a bound for the tracking error that is valid for the entire time axis, using the Alekseev non-linear variation of constants formula. It is the first non-asymptptic bound for the entire time axis in the sense that it is not based on the vanishing stepsize limit and associated limit theorems unlike prior works, and captures clearly the dependence on problem parameters and the dimension.
\end{abstract}

\bigskip

%\begin{IEEEkeywords}
\noindent \textbf{Keywords}:
%\begin{keywords}
{Stochastic Approximation; Constant Stepsize; Non-asymptotic bound; Alekseev's Formula; Martingale Concentration Inequalities; Perturbation Analysis; Non-stationary Optimization}
%\end{keywords}
%\end{IEEEkeywords}
%Intro
%%%%%%%%%%%%%%%%%%%%%%%%%%%%%%%%%%%%%%%%%%%%%%%%%%%%%%%%
\section{Introduction}

\subsection{Background}

Robbins and Monro proposed in \cite{robbins1951stochastic}  a stochastic iterative scheme
\begin{equation}
x_{n+1} = x_{n} + a_{n}\bs{h(x_n) + M_{n+1}}, \ n \geq 0, \label{SA}
\end{equation}
for finding the zero(s) of a function $h(\cdot)$ given its noisy evaluations, with $M_{n+1}$ being the measurement noise. By a clever choice of the stepsize sequence $\{a_n\}$, viz., those satisfying
\begin{equation}
\sum_{n}a_n = \infty, \ \sum_{n} a_n^2 < \infty, \label{step}
\end{equation}
they were able to show almost sure (a.s.) convergence of the scheme to a zero of $h$ under reasonable hypotheses. The scheme has since been a cornerstone of not only statistical computation, but also in a variety of engineering applications ranging from signal processing, adaptive control, to more recently, machine learning. See \cite{chen2006stochastic, borkar2008stochastic} for some recent pedagogical accounts of stochastic approximation. What makes it so popular is its typically low per iterate memory and computational requirement and ability to `average out' the noise, which makes it ideal for adaptive estimation/learning scenarios. %(see, e.g., \cite{eweda1994comparison}).
A later viewpoint \cite{DerevitskiiFradkov}, \cite{Ljung} views (\ref{SA}) as a noisy discretization of the ordinary differential equation (ODE for short)
\begin{equation}
\dot{x}(t) = h(x(t)) \label{ode}
\end{equation}
with decreasing stepsize and argues that the errors due to discretization and noise are asymptotically negligible under (\ref{step}), so that it has the same asymptotic behaviour as (\ref{ode}). See \cite{borkar2008stochastic,borkarmeyn2000ode} for a fuller development of this approach.\\

The clean theory under (\ref{step}) notwithstanding, there has also been an interest and necessity to consider constant stepsize $a_n \equiv a > 0$. The strong convergence claims under (\ref{step}) can no longer be expected\footnote{barring some very special cases, e.g., when the right hand side of (\ref{SA}) is contractive uniformly w.r.t.\ the noise variable.}, e.g., for the simple case of $\{M_n\}$ being i.i.d.\ zero mean, the best one can hope for is convergence to a stationary distribution. What one can still expect is a high probability concentration around the desired target, viz., zero(s) of $h$, if the stepsize $a$ is small \cite{kushner1981asymptotic, kushner1997stochastic}. This is acceptable and in fact unavoidable in the important application area of tracking a slowly moving target or measuring a slowly exciting signal \cite{guo1995performance, benveniste1987design}, and other instances of learning in a slowly varying environment. This is because with decreasing stepsize, the algorithmic time scale, dictated by the decreasing stepsize, eventually becomes slower than the timescale on which the target is moving and thereby loses its tracking ability. The alternative of either frequent resets or adaptive loop gain is often not desirable because of the additional logic it requires, particularly when the algorithm is hard-wired \cite{utkin2017sliding,borkar2008stochastic}, and one settles for a judiciously chosen constant stepsize. Such schemes are a part of traditional signal processing and neural network algorithms \cite{eweda1994comparison, diamantaras1996principal, finnoff1993diffusion,  sharma1996asymptotic, kuan1991convergence, ng1999fast} and often show up in important applications such as quasi-stationary experimentation for meteorology \cite{chappell1986quasi}, slowly exciting physical wave measurement \cite{burke1971gravitational}, and more recently in online learning and non-stationary optimization \cite{zhu2017tracking,  wilson2018adaptive}. \revred{However, the focus in online learning is cumulative regret bounds instead of all time bounds}. \\

These developments have motivated analysis of constant stepsize schemes \cite{Bucklew, kushner1981averaging, kushner1981asymptotic, pflug1986stochastic, pflug1990nonasymptotic, borkarmeyn2000ode,   joslin2000law,  sharma1996asymptotic}  in the form of various limit theorems, (non) asymptotic analysis, law of iterated logarithm etc., but a convenient bound valid for all time, a useful metric for tracking applications in a slowly varying environment,
seems to be a topic of relatively recent interest \cite{zhu2017tracking}. Our objective here is to provide precisely one such bound.

\bigskip

\subsection{Comparison with Prior Art}

 As already mentioned, one of the main motivation of constant stepsize stochastic approximation has been their ability to track slowly moving environments. Not surprisingly, much of the early work has come from signal processing and control, most notably in adaptive filtering, and this continues to be its primary application domain. Some representative works are \cite{benveniste1987design}, \cite{farden1981tracking}, \cite{Guo1}, \cite{guo1995performance}, \cite{Guo2}, \cite{sharma1996asymptotic}, \cite{zhu2017tracking}, etc.\\

  Much of this work concerns tracking in specific models and the proposed schemes  usually have a very specific structure, e.g., linear. From purely theoretical angle, analyses appear in \cite{Bucklew}, \cite{joslin2000law}, \cite{kushner1981asymptotic}, \cite{pflug1986stochastic}, \cite{pflug1990nonasymptotic} among others. The emphasis of the latter is towards analyzing convergence properties in the small stepsize limit and the associated functional central limit theorem for fluctuations around the deterministic o.d.e.\ limit, except in case of \cite{joslin2000law}, which establishes a law of iterated logarithms, and \cite{pflug1990nonasymptotic}, which obtains confidence bounds for a specific choice of adaptive stepsizes and stopping rule. The latter is a non-asymptotic result as the title suggests, but in a different sense than us. \\

   In the context of tracking, the functional central limit theorem characterizing a Gauss-Markov process as a limit in law of suitably scaled fluctuations is also used for suggesting performance metrics for tracking application, see, e.g., \cite{benveniste1982measure}.\\

    More recently constant stepsize stochastic gradient and its variants have elicited interest in machine learning literature due to the possibility of using them in conjunction with iterate averaging to get better speed than decreasing stepsizes, see, e.g, \cite{BachMoulines}. The pros and cons of these have been discussed, e.g., in \cite{Laxmi}. This motivation, however, is not relevant for tracking because iterate averaging is also a stochastic approximation with decreasing stepsizes ($a_n = 1/(n+1)$ to be precise) and decreasing stepsizes is simply not an option here because the iterates will eventually become slower than the slowly varying signal and lose their tracking ability.\\

     Another strand of work analyzes tracking in the specific context of tracking the solution of an optimization problem when its parameters drift slowly \cite{wilson2018adaptive}.
Tracking problems have also been studied in the literature as regime switching stochastic approximations when the evolution is modulated by a Markov chain on a time scale equal to or faster than that of the algorithm. This situation has been analysed through mean squared error bounds \cite{yin2004regimeswitch,yin2009regimeswitch} and is close in spirit to ours.

\bigskip

\subsection{Our contributions}

Our main result is Theorem \ref{thm: mainthm}. The highlights of this result are as follows.\\

\begin{enumerate}
\item Our set-up is applicable to a very general scenario that includes unbounded correlated noise without any explicit evolution model, no explicit strong convexity or linearity assumptions regarding the dynamics being tracked, and so on, rendering it a more general framework than in prior work.\\

\item  We provide a bound valid for the entire time axis, not only for a finite time interval as in, e.g., `sample complexity' bounds, or purely asymptotic as in, e.g., cumulative regret bounds or asymptotic error bounds. That is, it holds uniformly for all $n, \ 0 \leq n < \infty$, not only for $n \leq$ some $N$ or in the $n \to \infty$ limit, which need not reflect the finite time behavior.
This is particularly relevant here because we are considering the problem of continuously  tracking a \textit{time-varying,  in particular, non-stationary} target. Furthermore, this is achieved under a very general noise model, viz., martingale differences which allow dependence across times, requiring only uncorrelatedness. Their conditional distributions given the past are required to satisfy an exponential moment bound that is satisfied by most standard distributions such as exponential, gaussian, their mixtures, etc., except the heavy tailed ones.\\

\item This bound is \textit{non-asymptotic}, i.e., it is
 derived for the actual constant stepsize $a > 0$ and not from an idealized limiting scenario based on a limit theorem for fluctuations in the $a\downarrow 0$ limit, as is often the case in prior studies.
To the best of our knowledge, ours is the first result to achieve this. Also, our derivation of the bound allows us to keep track of its dependence on problem parameters, dimension, etc.\ if needed.\\

\item We bound the \textit{exact} error which is given by the Alekseev formula, there is no approximation at this stage.  Furthermore, we analyze this error keeping the slow movement of the target being tracked in tact, without treating it as essentially static as, e.g., in \cite{benveniste1982measure}.

\end{enumerate}

As for potential avenues for improvement, we have the following observations:

\begin{enumerate}

\item It appears unlikely that the bounds that use Lipschitz constants etc., can be improved much, if at all. The moment bounds on martingale differences use state of the art martingale concentration inequalities and could improve if better inequalities become available. It may be noted that we assume exponential tails for the distributions of martingale differences. Stronger inequalities such as McDiarmid's inequality may be used under stronger hypotheses such as uniformly bounded martingale differences, see, e.g., \cite{BorkarBound}. On a different note, if we allow heavy tailed noise, one would get weaker claims using the corresponding, naturally weaker, concentration inequalities. Rather limited results are available here, see e.g., \cite{Joulin} and its application to stochastic approximation in \cite{Ananth}.\\

\item  One potential spot for improvement is in the use of the assumption $(\dagger)$ below, which entails a stability condition for a linearized dynamics which is time-dependent. Such conditions are available only under constraints on the time scale separation between fast dynamics  (of the algorithm) and the slow one (of the target). This, as argued later, is unavoidable, because there will be no tracking  otherwise. This fact necessitates such a condition or something close to it. The one we have used, due to Solo \cite{solo1994stability}, is the most general available to our knowledge. (Another class of sufficient conditions available is based on existence of Liapunov functions and not explicit like Solo's.)
\end{enumerate}

\bigskip

\subsection{Organization}

 We begin by describing the problem formulation in the next section. This is followed by the Alekseev formula as a non-linear generalization of the variation of constants formula, and  a key exponential stability assumption. A useful set of sufficient conditions for this assumption are recalled. Section 3 details the error analysis  characterizing the tracking behaviour,  developed through a sequence of lemmas and leading to the main result in section 4.  Section 5 concludes with some discussion. An appendix recalls a martingale concentration inequality used in the main text.

\bigskip

\subsection{Symbols and Notation}

The section number where the notation first appears is given in parentheses.
\begin{align*}
\sym{x_n}{Iterate at time $n$ (2.1)}
\sym{h(\cdot,\cdot)}{driving vector field of the tracking scheme (2.1)}
%\sym{h_{\wedge}, h_{\vee}}{Upper and lower bound on $h(\cdot,\cdot)$ (2.1)}
\sym{a}{Step-size (2.1)}
\sym{M_{n+1}}{Martingale difference noise (2.1)}
\sym{\varepsilon_{n+1}}{Additive error (2.1)}
\sym{\varepsilon^*}{Bound on $\|\varepsilon_{n+1}\|$ (2.1)}
\sym{y(\cdot)}{Slowly varying signal to be tracked (2.1)}
\sym{\epsilon}{Small ($\ll 1$) number controlling rate of $y(\cdot)$ (2.1)}
\sym{\gamma(\cdot)}{Vector field driving $y(\cdot)$ (2.1)}
\sym{C^*}{$\max\bbc{\sup_n\E\bs{\|x_n\|^2}^{1/2}, \sup_n\E\bs{\|x_n\|^4}^{1/4}}$ (2.1)}
\sym{C_{\gamma}}{$\sup_{t\geq 0}\|y(t)\|$ (2.1)}
\sym{\cm, \delta}{Constants featuring in the bound for $\nom{M_{n+1}}$ (2.1)}
\sym{\Phi}{Transition matrix of linear system (2.2)}
\sym{z(\cdot)}{Slowly varying equilibrium for the algorithm (2.3)}
\sym{d}{Dimension of $x_n$ and $z(\cdot)$ (2.1/2.3)}
\sym{\nabla}{Gradient operator (2.3)}
\sym{\cp,\beta}{Constants featuring in the exponential bound  for $\Phi$ (2.3)}
%\sym{K}{Bound on $\gamma(\cdot)$}
%\sym{L_{h},L_{\lambda},L_{\gamma},L_{\tilde{h}}}{Lipschitz constants of
%$h(\cdot),\lambda(\cdot)$, $\gamma(\cdot)$, $\tilde{h}(\cdot)$ resp.}
\sym{L_f}{Lipschitz constant for Lipschitz function $f$ (generic)}
\sym{G_f}{constant of linear growth for function $f$, i.e., $|f(x)| \leq C_f(1 + \|x\|)$. (generic)}%check needed
\sym{B_f}{$\sup_x|f(x)|$ for a bounded function $f$ (generic)}%check needed
\sym{K_{\gamma}}{$\max_{\{y\}\leq C_{\gamma}}\|\gamma(y)\|$ (3.2)}
\sym{O(\cdot)}{Big O notation (3.3)}
%\sym{H}{$\max\{|h_{\vee}|,|h_{\wedge}|\}$}
%\sym{L_{\tilde{h}}}{$L_{h} + \ep L_{\lambda}L_{\gamma}$}
\sym{\mu}{$1/ \beta$ (3.3)}
\sym{K_1}{$L_{\tilde{h}}(1+C_{h}+C_{\gamma})+\varepsilon^*$ (3.3)}
\sym{K_2}{$\cp L_{\tilde{h}}$ (3.3)}
\sym{K_3}{$K_1 + L_{\gamma}a\ep$ (3.3)}
\sym{K_4}{$\max\bc{2\cm/\delta^2, \cm^2 / \delta^2}$ (3.3)}
\sym{K_5}{$K\cp^3L_D/\beta$ (3.4)}
\sym{K_6}{$\cp^3L_{\gamma}L_D\ep$ (3.4)}
\sym{K_7}{$\max\bc{24\cm \delta^4, 4\cm^2 \delta^4}$ (3.4)}
\sym{K_8}{$2\sqrt{6\cm}C_h^2/ \delta^2$ (3.4)}
\sym{K_{9}}{$\dfrac{\cm\gamma_1d^{1.5}}{\delta}$ (3.5)}
%\sym{F_n}{$\sum_{k=0}^{n-1}a e^{-\beta(t_n-t_{k+1})}\nom{M_{k+1}}$}
%\sym{G_n}{$a\sum_{k=0}^{n-1}e^{-2\beta(t_n-t_{k+1})}\nom{M_{k+1}}$}
%\sym{H_n}{$a\sum_{k=0}^{n-1}e^{-2\beta(t_n-t_{k+1})}\norm{M_{k+1}}$}
\end{align*}

%Lemmas
\section{Preliminaries}
\subsection{The tracking problem}
We consider a constant step size stochastic approximation algorithm given by the $d$-dimensional iteration
\begin{equation}
x_{n+1}  = x_n + a \bs{h(x_n,y_n) + M_{n+1} + \varepsilon_{n+1}}, \ n \geq 0, \label{alg}
\end{equation}
for tracking a slowly varying signal governed by
\begin{equation}
\dot{y}(t) = a\ep\gamma(y(t)), \label{slow}
\end{equation}
with \revred{$0 < a < 1$}, $0 < \ep \ll 1$. Also,  $y_n := y(n), n \geq 0,$ the trajectory of (\ref{slow}) sampled at unit\footnote{without loss of generality} time intervals coincident with the clock of the  above iteration, with slight abuse of notation. We assume that $y(t), t \geq 0,$ remains in a bounded set. The term $\varepsilon_{n+1}$ represents an added bounded component attributed to  possible numerical errors (e.g., error in gradient estimation in case of stochastic gradient algorithms \cite{borkar2008stochastic}). We assume the following:
\begin{itemize}
\item The smallness condition on $\ep$ ensures a separation of time scale between the two evolutions (\ref{alg}) and (\ref{slow}), in particular (\ref{slow}) has to be `sufficiently slow' in a sense to be made precise later.

\item $h: (x,y) \mapsto h(x,y)$ is twice continuously differentiable in $x$ with the first and second partial derivatives in $x$ bounded uniformly in $y$ in a compact set, and Lipschitz in $y$. A common example is where $h(x, y) = -(x - y)$ corresponding to least mean square criterion for tracking in the above context with $x$, resp.\ $y$ standing for the states of the tracking scheme and the target resp.
%A common example is where $h(x, y) = \llbracket-(x - y)\rrbracket_{h_{\wedge}}^{h_{\vee}}$, corresponding to least mean square criterion for tracking albeit with truncation, where $\llbracket\cdot\rrbracket_{\cdot}^{\cdot}$ represents the component wise truncation operation.

\item $\gamma(\cdot)$ is Lipschitz continuous,

\item $C^* := \max\bbc{\sup_n\E\bs{\|x_n\|^2}^{1/2}, \sup_n\E\bs{\|x_n\|^4}^{1/4}} < \infty$. (See \cite{borkarmeyn2000ode} for sufficient conditions for uniform boundedness of second moments. Analogous conditions can be given for fourth moments.)

\item $C_{\gamma} := \sup_{t\geq 0}\|y(t)\| < \infty$,

\item there exists a constant $\varepsilon^* > 0$ such that
\begin{align}
\nom{\varepsilon_{n+1}} \leq \varepsilon^*, ~\forall n \geq 0, \label{eq: epbound}
\end{align}
\item   ${M_n}$  is a martingale difference sequence w.r.t.\ the increasing $\sigma$-fields
$$\F_n := \sigma(x_m, M_m, \varepsilon_m, m \leq n), n \geq 0,$$
and satisfies: there exist continuous functions $c_1, c_2 : \mathcal{R}^d \to (0, \infty)$ with $c_2$ being bounded away from $0$, such that
\begin{equation}
P(||M_{n+1}|| > u | \mathcal{F}_n) \leq c_1(x_n)e^{-c_2(x_n)u}, \ n \geq 0, \label{eqn: exp-bd}
\end{equation}
for all $u \geq v$ for a fixed, sufficiently large $v > 0$ (i.e., a sub-exponential tail) with
\begin{equation}
\sup_nE\left[c_1(x_n)\right] < \infty. \label{bddconst}
\end{equation}
\end{itemize}
In particular, (\ref{eqn: exp-bd}), (\ref{bddconst}) together imply that there exist $\delta, \cm > 0$ such that
\begin{equation}
E\left[e^{\delta\|M_{n+1}\|}\right] \leq \cm, \ n \geq 0.
\end{equation}
Using the Taylor expansion of the exponential function, we get
\begin{equation}
\sum_{m=0}^{\infty} \frac{\delta^m E\left[\|M_{n+1}\|^m \right]}{m!} \leq \cm, \ n \geq 0.
\end{equation}
As each term in the above summation is positive, we can conclude that for all $n, m \geq 0$,
\begin{equation}
 E[ ||M_{n+1}||^m ] \leq \dfrac{\cm m!}{\delta^m}. \label{eq:mgsingleMoment}
\end{equation}
We shall be interested in $m = 2, 4$. \\
%Also consider the joint absolute moment of order (l,m) of the random variables $||M_{i+1}||$ and $||M_{j+1}||$ such that $i < j$,
%Then
%\begin{eqnarray}
%\lefteqn{E[||M_{i+1}||^{l}||M_{j+1}||^{m}] } \nonumber \\
%&=& E[E[||M_{i+1}||^{l}||M_{j+1}||^{m} | \mathcal{F}_{i+1}]]  \nonumber \\
%&=& E[||M_{i+1}||^{l}E[||M_{j+1}||^{m} | \mathcal{F}_{i+1}]]  \nonumber \\
%&=& E[||M_{i+1}||^{l}E[E[||M_{j+1}||^{m} | \mathcal{F}_{j}] | \mathcal{F}_{i+1}]]  \nonumber \\
%&\leq&  \dfrac{\cm m!}{\delta^m}E[||M_{i+1}||^{l}]  \nonumber \\
%&\leq& \dfrac{\cm^2 m!l!}{\delta^{l+m}}.   \label{eq:mgdoubleMoment}
%\end{eqnarray}

These bounds will play an important role in our error analysis.
We next state a formula due to Alekseev  \cite{alekseev1961estimate} that captures the difference between the trajectory of a system and its (regular) perturbation, and may be viewed as a `non-linear variation of constants' formula.
\subsection{Alekseev's formula}

Consider the ODE
\begin{displaymath}
\dot{w}(t) = f(t,w(t)),~ t\geq 0,
\end{displaymath}
and its perturbed version,
\begin{displaymath}
\dot{u}(t) = f(t,u(t)) + g(t,u(t)),~ t\geq 0,
\end{displaymath}
where $f,g$ : $\mathcal{R} \times \mathcal{R}^d \mapsto \mathcal{R}^d $, with:
\begin{itemize}

\item $f(t, x)$ is measurable in $t$ and continuously differentiable in $x$ with bounded derivatives uniformly w.r.t.\ $t$, and,

\item  $g(t, x)$ is measurable in $t$ and Lipschitz in $x$ uniformly w.r.t.\ $t$.

\end{itemize}
Let $w(t,t_0,u_0)$ and $u(t,t_0,u_0)$ denote respectively the solutions to the above non-linear systems for $t \geq t_0$, satisfying  $w(t_0,t_0,u_0) = u(t_0,t_0,u_0) = u_0$. Then for $t \geq t_0$,
\begin{align}
u(t,t_0,u_0) &= w(t,t_0,u_0) \ls +
\int_{t_0}^{t}\Phi(t,s,u(s,t_0,u_0))g(s,u(s,t_0,u_0))\ds, \label{eq:alekseev}
\end{align}
where $\Phi(t,s,w_0)$ for any $w_0 \in \mathcal{R}^d$ is the fundamental matrix of the linearized system
\begin{align}
\dot{\phi}(t) = \frac{\partial f}{\partial w}(t,w(t,s,w_0))\phi(t),~t\geq s,
\end{align}
with $\Phi(s,s,w_0) = \mathcal{I}_d$, the d-dimensional identity matrix. That is, it is the unique solution to the matrix linear differential equation
$$\dot{\Phi}(t,s,w_0) = \frac{\partial f}{\partial w}(t,w(t,s,w_0))\Phi(t,s,w_0)$$
with the aforementioned initial condition at $t = s$. The equation (\ref{eq:alekseev}) is the Alekseev nonlinear variation of constants formula \cite{alekseev1961estimate} (see also Lemma 3, \cite{brauer1966perturbations}). \\

\revred{The generalization of Alekseev's nonlinear variation of constants for differing initial conditions \cite{borkar2018orbit} is given by
\begin{align}
u(t,t_0,u_0) &= w(t,t_0,w_0) + \Phi(t,t_0,u_0)(u_0 - w_0) \statls +
\int_{t_0}^{t}\Phi(t,s,u(s,t_0,u_0))g(s,u(s,t_0,u_0))\ds \label{eq:genalekseev}
\end{align}}
where the additional additive term captures the contribution due to differing initial conditions. This term will decay exponentially under our assumption $(\dagger)$  below.
\subsection{Perturbation analysis}

In view of the ODE approach described earlier, we consider  the candidate ODE
\begin{equation}
\dot{x}(t) = h(x(t),y) \label{param}
\end{equation}
where we have treated the $y$ component as frozen at a fixed value in view of its slow evolution (recall that $\ep << 1$). We assume that this ODE has a globally stable equilibrium $\lambda(y)$ where $\lambda$ is twice continuously differentiable with bounded first and second derivatives. (Typically, this can be verified by using the implicit function theorem.) In particular,
\[
h(\lambda(y),y) = 0~ \forall ~y \implies h(\lambda(y(t)), y(t)) = 0~ \forall~ t \geq 0.
\]
%Let $x(t,s;x_0,y_0)$ and $y(t,s;y_0)$ denote the solution of the following system of ODE's:
%\begin{align*}
%\end{align*}
Define $z(t) = \lambda(y(t)), t \geq 0$. Then
\begin{align*}
\dot{z}(t) & = \ep a\nabla\lambda(y(t))\gamma(y(t))\\
& = a h(\lambda(y(t)), y(t)) + \ep a\nabla \lambda(y(t))\gamma(y(t))\\
& = a h(z(t), y(t)) + \ep a\nabla \lambda(y(t))\gamma(y(t))= a \tilde{h}(z(t),y(t))
\end{align*}
for
$$\tilde{h}(z,y) := h(z,y) + \ep \nabla\lambda(y)\gamma(y).$$
The corresponding Euler scheme would be
\begin{align*}
z_{n+1} = z_n + a \tilde{h}(z_n,y_n).
\end{align*}
The tracking algorithm (\ref{alg})  can therefore be equivalently written as:
\begin{align}
x_{n+1} & = x_n + a \bs{h(x_n,y_n) + M_{n+1} + \varepsilon_{n+1}}\\
& = x_n + a\bs{\tilde{h}(x_n,y_n) - \ep \nabla\lambda(y_n)\gamma(y_n)\ls  + M_{n+1} + \varepsilon_{n+1}}\\
& = x_n + a \bs{\tilde{h}(x_n,y_n) + \kappa_n(y_n)}, \label{eqn: x_SA}
\end{align}
where,
\begin{align}
\kappa_n(y_n) & = - \ep \nabla\lambda(y_n)\gamma(y_n) + M_{n+1} + \varepsilon_{n+1}.
\end{align}
%\tilde{h}(x_n,y_n) & = h(x_n,y_n) + \ep \nabla\lambda(y_n)\gamma(y_n),\\
Let $\x{}$ be the linearly interpolated trajectory of the stochastic approximation iterates such  that $\x{k} = x_k$. That is, for $t_n \equiv na \ \forall n$,
\begin{align}
\x{} = \x{n} + \frac{t-t_n}{a}\bs{\x{n+1}-\x{n}}, \ t \in [t_n, t_{n+1}]. \label{eqn: inter}
\end{align}
Then from (\ref{eqn: x_SA}), we get
\begin{align}
\x{n+1} & = \x{0} + \sum_{k=0}^{n}a\tilde{h}(\x{k},\y{k}) \ls - \sum_{k=0}^{n}a\ep \nabla\lambda(\y{k})\gamma(\y{k})
\statls + \sum_{k=0}^{n}aM_{k+1} + \sum_{k=0}^{n}a\varepsilon_{k+1}\\
& = \x{0} + \sum_{k=0}^{n}\int_{t_k}^{t_{k+1}} \tilde{h}(\x{k},\y{k})\ds \ls - \sum_{k=0}^{n}\int_{t_k}^{t_{k+1}}\ep \nabla\lambda(\y{k})\gamma(\y{k})\ds \statls +\sum_{k=0}^{n}\int_{t_k}^{t_{k+1}}M_{k+1}\ds + \sum_{k=0}^{n}\int_{t_k}^{t_{k+1}}\varepsilon_{k+1}\ds.
\end{align}
For $k \geq 0$ and $s \in [t_k,t_{k+1}]$, define perturbation terms:
\begin{align*}
\ze_1(s) & := \tilde{h}(\x{k},\y{k}) - \tilde{h}(\bar{x}(s),y(s)),\\
\ze_2(s) & := M_{k+1},\\
\ze_3(s) & := \varepsilon_{k+1},\\
\ze_4(s) & := -\ep \nabla\lambda(\y{k})\gamma(\y{k}).
\end{align*}
Thus
\begin{align*}
\x{n+1} & = \x{0} + \int_{t_0}^{t_{n+1}} \tilde{h}(\bar{x}(s),y(s))\ds \ls \\
& + \int_{t_0}^{t_{n+1}} \bbc{ \ze_1(s) + \ze_2(s) + \ze_3(s) + \ze_4(s)}\ds.
\end{align*}
Using (\ref{eqn: inter}),
\begin{align}
\x{} & = \x{0} + \int_{t_0}^{t} \tilde{h}(\bar{x}(s),y(s))\ds \ls +\int_{t_0}^{t} \bbc{\ze_1(s) + \ze_2(s) + \ze_3(s) + \ze_4(s)}\ds.
\end{align}
Define $$\Xi(t) = \ze_1(t) + \ze_2(t) + \ze_3(t) + \ze_4(t).$$
Consider the coupled systems
\begin{align}
\dot{z}(t) & = \tilde{h}(z(t),y(t)), \label{eqn: up1} \\
\dot{y}(t) & = \ep a \gamma(y(t)), \label{eqn: up2}
\intertext{and}
\dot{\bar{x}}(t) & = \tilde{h}(\bar{x}(t),y(t)) + \Xi(t) \label{eqn: p1},\\
\dot{y}(t) & = \ep a \gamma(y(t)). \label{eqn: p2}
\end{align}
The ODE (\ref{eqn: p1}) can be seen as a perturbation of the (\ref{eqn: up1}), with the perturbation term being $\Xi(t)$.
\newline
\newline
Let $D(\cdot,\cdot) \in \mR^{d \times d}$ denote the Jacobian matrix of $h$ (and therefore of $\tilde{h}$) in the first argument, and $\Gamma(\cdot) \in \mR^{d \times d}$ the Jacobian matrix of $\lambda$. Then the linearization or `equation of variation' of  (\ref{eqn: up1}) is
\begin{align}
\label{eqn: lin}
\dot{r}(t) = D(z(t), y(t))r(t).
\end{align}
For $t \geq s \geq 0$ and $x,y \in \mR^{d},$ let $\Phi(t,s;x_0,y_0)$ denote the fundamental matrix for the time varying linear system (\ref{eqn: lin}), i.e., the solution to the matrix-valued differential equation
\begin{align}
\label{eqn: matdiff}
\dot{\Phi}(t,s;x_0,y_0) = D(z(t),y(t))\Phi(t,s;x_0,y_0), ~t\geq s,
\end{align}
with initial condition $\Phi(s, s; x_0,y_0) = I$.
Then by Alekseev's formula,
\begin{align}
\bar{x}(t) & = z(t) + \revred{\Phi(t,t_0;\bar{x}(t_0),y_0)(\bar{x}(t_0)-z(t_0))} \ls + \int_{t_0}^{t}\Phi(t,s;\bar{x}(s),y(s))\Xi(s)ds. \nonumber
%& = z(t) + \int_{t_0}^{t}\Phi(t,s;\bar{x}(s),y(s))\ze_1(s) + \int_{t_0}^{t}\Phi(t,s;\bar{x}(s),y(s))\ze_2(s)\\ %&\qquad + \int_{t_0}^{t}\Phi(t,s;\bar{x}(s),y(s))\ze_3(s) + \int_{t_0}^{t}\Phi(t,s;\bar{x}(s),y(s))\ze_4(s)\ds.
\end{align}
Define
\begin{align}
\revred{\varrho_n} & = \revred{\Phi(t_n,t_0;\bar{x}(t_0),y_0)(\bar{x}(t_0)-z(t_0))}\\
A_n & = \sum_{k=0}^{n-1}\int_{t_k}^{t_{k+1}}\Phi(t_n,s;\bar{x}(s),y(s)) \ls \bs{\tilde{h}(\x{k},\y{k}) - \tilde{h}(\bar{x}(s),y(s))}\ds, \label{eqn: andef}\\
B_n & = \sum_{k=0}^{n-1}\int_{t_k}^{t_{k+1}}\Phi(t_n,s;\bar{x}(s),y(s))M_{k+1}\ds, \label{eqn: bndef}\\
C_n & = \sum_{k=0}^{n-1}\int_{t_k}^{t_{k+1}}\Phi(t_n,s;\x{k},\y{k})M_{k+1}\ds, \label{eqn: cndef}\\
D_n & = \sum_{k=0}^{n-1}\int_{t_k}^{t_{k+1}}\Phi(t_n,s;\bar{x}(s),y(s))\varepsilon_{k+1}\ds, \label{eqn: dndef}\\
E_n & = \sum_{k=0}^{n-1}\int_{t_k}^{t_{k+1}}\Phi(t_n,s;\bar{x}(s),y(s))\ls \hqquad \hqquad \hqquad \times \ep \nabla\lambda(\y{k})\gamma(\y{k})\ds \label{eqn: endef}.
\end{align}
Then
\begin{align}
\x{n} & = {z}(t_n) + \int_{t_0}^{t_n}\Phi(t_n,s;\bar{x}(s),y(s))\ze_1(s)\ds \ls + \int_{t_0}^{t_n}\Phi(t_n,s;\bar{x}(s),y(s))\ze_2(s)\ds \nonumber\\ &+ \int_{t_0}^{t_n}\Phi(t_n,s;\bar{x}(s),y(s))\ze_3(s)\ds \ls + \int_{t_0}^{t_n}\Phi(t_n,s;\bar{x}(s),y(s))\ze_4(s)\ds \revred{ + \varrho_n}\\
& = {z}(t_n) + A_n + (B_n - C_n) + C_n \ls + D_n - E_n \revred{ + \varrho_n}.
\end{align}
Therefore
\begin{align*}
\nom{\x{n}-{z}(t_n)} & \leq \nom{A_n} + \nom{B_n- C_n} + \nom{C_n} \ls + \nom{D_n} + \nom{E_n} \revred{ + \nom{\varrho_n}}.
\end{align*}
Also
\begin{align}
\E\bs{\norm{\x{n}-{z}(t_n)}}^{1/2} & \leq \E\bs{\norm{A_n}}^{1/2} + \E\bs{\norm{E_n}}^{1/2} + \E\bs{\norm{C_n}}^{1/2} + \nonumber \\
& \ \  \E\bs{\norm{D_n}}^{1/2} + \E\bs{\norm{B_n - C_n}}^{1/2} \revred{ + \E\bs{\norm{\varrho_n}}^{1/2}}.\label{eqn: dev}
\end{align}
We shall individually bound the above error terms in the next section under the important assumption of \textit{exponential stability} of the equation of variation (\ref{eqn: lin}): \\

\noindent \textbf{$(\dagger)$} There exists a $\beta > 0$ such that $\forall ~t > s \geq 0$ and $x_0, y_0$,
\[
\nom{\Phi(t,s; x_0, y_0)} \leq \cp e^{-\beta(t-s)}.
\]

This seemingly restrictive  assumption requires some discussion, we argue in particular that some such assumption is \textit{essential} if one is to obtain bounds valid for all time.\\

 To begin, since the idea is to have the parametrized o.d.e.\ (\ref{param}), which is a surrogate for the original iteration, track its unique asymptotically stable equilibrium parametrized by $y$ as the parameter $y \approx y(t)$ changes slowly, it is essential that its rate of approach to the equilibrium, dictated by the spectrum of its linearized drift at this equilibrium, should be much faster than the rate of change of the parameter. This already makes it clear that there will be a requirement of minimum time scale separation for tracking to work at all.\\

 A stronger motivation comes from the fact that the tracking error, given exactly by the Alekseev formula, depends on the linearization of the o.d.e.\ itself around its ideal trajectory $z(\cdot)$, which is a time-varying linear differential equation of the type $\dot{r}(t) = A(t)r(t)$. It is well known in control theory that this can be unstable even if the matrix $A(t)$ is stable for each $t$, see, e.g., Example 8.1, p.\ 131, \cite{Rugh}. Stability is guaranteed to hold only in the special case of $A(t)$ varying slowly with time. The most general result in this direction is that of \cite{solo1994stability}, which we recall below as a sufficient condition  for $(\dagger)$. (There have also been  some extensions thereof to nonlinear systems, see, e.g., \cite{Peut}.)
\newline
\newline
Consider the following time varying linear dynamical system:
\begin{align}
\label{eqn: persolo}
\dot{x}(t) = [A(t) + P(t)] x(t)
\end{align}
and assume the following for this perturbed system:
\begin{enumerate}
%\item $\nom{A(t)} \leq A < \infty$ and $\nom{\dot{A}(t)} \leq \dot{A} < \infty.$
\item There exists $\bar{A} > 0$ such that \[ \limsup_{T\uparrow\infty}\frac{1}{T}\int_{t_0}^{t_0+T}\nom{A(s)}\ds \leq \bar{A} \ ~\forall t_0. \]
\item There exists $\gamma \in (0, 1], b > 0$ and $\beta > 0$ sufficiently small in the sense made precise in the theorem below, such that \[ \sum_{t = n_0}^{n_0+n}\nom{A(t_2 + (t-1)T) - A(t_1 + (t-1)T)} \leq Tb + T^{\gamma}(n+1)\beta \ ~\forall n, n_0, \] whenever $|t_2 - t_1| \leq T$.
\item Let $\alpha(t)$ be the real part of the eigenvalue of $A(t)$ whose real part is the largest in absolute value. Then there exists $\bar{\alpha} < 0$ such that, for any $T > 0$,
\[ \limsup_{N\uparrow\infty}\frac{1}{N}\sum_{n = n_0}^{n_0+N}\alpha(s + nT) \leq \bar{\alpha} \ ~\forall s, n_0. \]
\item There exists $\delta > 0$ such that \[ \limsup_{T\uparrow\infty}\int_{t_0}^{t_0+T}\nom{P(s)}\ds \leq \delta, ~ \forall ~ t_0.\]
\end{enumerate}

\begin{thm}[Stability test for deterministic perturbations using eigenvalue based characterization \cite{solo1994stability}]
If the previously mentioned assumptions $(A_1)-(A_4)$ hold, the system $\dot{x}(t) = (A(t) + P(t))x(t)$ is exponentially stable provided we chose $\epsilon, \delta > 0$  small enough so that \[ \bar{\alpha} + \epsilon < 0,\] and
\[
\bar{\alpha} + \epsilon + M_{\epsilon}\delta <0,
\]
with $M_{\epsilon} = 3(\frac{2(\bar{A} + b)}{\epsilon} + 1)^{p-1}/2$, where $\bar{A}, b, \bar{\alpha}$ are as defined in ($A_1$)-($A_4$) and $\beta$ is small enough so that:
\[
\bar{\alpha} + \epsilon + M_{\epsilon} \delta + 2(\ln M_{\epsilon})^{\gamma/(\gamma + 1)}[\beta(M_{\epsilon} + \epsilon/(\bar{A} + b))]^{1/(\gamma + 1)} < 0.
\]
\end{thm}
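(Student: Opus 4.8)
The theorem as stated is attributed verbatim to Solo~\cite{solo1994stability}, so the natural plan is not to re-derive it from scratch but to organize the argument along the lines of the ``frozen-time'' / ``slowly-varying'' stability paradigm, which is the standard way such results are proved. The plan is to reduce the time-varying system $\dot x(t) = (A(t)+P(t))x(t)$ to a nominal frozen system whose eigenvalues have negative real part on average, obtain an exponential bound for the frozen transition matrix via a Lyapunov argument, and then treat both the slow variation of $A(t)$ and the integrable perturbation $P(t)$ as additional perturbations to be absorbed through a Gr\"onwall/Bellman estimate, with the three displayed smallness conditions on $\epsilon$, $\delta$, $\beta$ appearing exactly as the thresholds under which the exponent of the resulting bound stays negative.

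Concretely, the steps I would carry out, in order, are: \emph{(i)} Fix a block length $T>0$ and, on each block $[t_1+(k-1)T,\,t_1+kT)$, freeze $A$ at a representative value $A_k$; assumption $(A_2)$ controls the accumulated discrepancy $\sum_k \|A(t)-A_k\|$ in terms of $Tb + T^\gamma(n+1)\beta$, so the ``frozen-block'' error is small when $\beta$ is small. \emph{(ii)} For the frozen matrix $A_k$, build a quadratic Lyapunov function $V_k(x)=x^\top Q_k x$ solving the Lyapunov equation for $A_k$; assumptions $(A_1)$ (average norm bound $\bar A$, together with $b$) give a uniform bound on $\|Q_k\|$ of the form $M_\epsilon = \tfrac32\big(\tfrac{2(\bar A+b)}{\epsilon}+1\big)^{p-1}$ (this is where the constant $M_\epsilon$ comes from), while assumption $(A_3)$, that the largest-real-part eigenvalue $\alpha(t)$ is on average at most $\bar\alpha<0$, yields a decay rate $\bar\alpha+\epsilon$ for the frozen flow after the $\epsilon$-slack needed to pass from spectral abscissa to an actual norm/Lyapunov decay estimate. \emph{(iii)} Patch the blocks together: the transition matrix over $[s,t]$ is a product of per-block contractions with factor $\approx e^{(\bar\alpha+\epsilon)T}$, degraded multiplicatively by the frozen-block mismatch and by the $P(t)$ contribution, the latter bounded by $e^{M_\epsilon \int\|P\|}\le e^{M_\epsilon\delta}$ per unit time via $(A_4)$. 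Collecting exponents gives an overall rate $\bar\alpha+\epsilon+M_\epsilon\delta + (\text{term in }\beta)$, and exponential stability follows precisely when this is negative, i.e. under the three stated inequalities, the last of which is the sharp form of the $\beta$-smallness requirement (the $(\ln M_\epsilon)^{\gamma/(\gamma+1)}$ and the exponent $1/(\gamma+1)$ arising from optimizing the block length $T\sim \ln M_\epsilon$ against the $T^\gamma$ growth in $(A_2)$).

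The main obstacle, and the step deserving the most care, is \emph{(iii)}: controlling how the per-block errors compound over an unbounded number of blocks without the constants blowing up. The delicate point is the choice of block length $T$ — it must be large enough that each frozen block is genuinely contractive (needs $T \gtrsim 1/|\bar\alpha+\epsilon|$ and absorption of the transient governed by $M_\epsilon$, hence the $\ln M_\epsilon$), yet small enough that the $T^\gamma(n+1)\beta$ term in $(A_2)$ and the accumulated $P$-mass stay subdominant. Balancing these two requirements is exactly what forces the peculiar exponents in the final displayed condition, and verifying that the optimized $T$ indeed keeps the net exponent strictly negative — uniformly in the block index and in the starting time $t_0$, as demanded by the $\limsup$ hypotheses — is the crux of the argument. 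Since the statement is quoted from~\cite{solo1994stability}, in the paper itself I would simply cite that reference for the full details and restrict the exposition to the role the theorem plays as a checkable sufficient condition for $(\dagger)$.
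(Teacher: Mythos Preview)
Your final sentence is exactly right and matches the paper: the theorem is stated verbatim from \cite{solo1994stability} with no proof given in the present work, its sole purpose being to supply a verifiable sufficient condition for assumption $(\dagger)$. The frozen-time sketch you outline is a reasonable summary of how Solo's argument proceeds, but it is extra material; the paper itself contains no proof or proof sketch of this theorem, only the citation.
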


\bigskip

The correspondence of the foregoing with our framework  is given by $A(\cdot) \leftrightarrow D( \cdot, \cdot)$, $P(\cdot) \leftrightarrow \Xi(\cdot)$.\\

We note here that there are also some sufficient conditions for stability of time-varying linear systems in terms of Liapunov functions, e.g., \cite{zhou2016asymptotic}, \cite{zhou2017stability}, but they appear not so easy to verify.

\section{Error bounds}

Here we obtain the error bounds through a sequence of lemmas.

\subsection{Bound on $D_n$}
\begin{lem}For $D_n$ defined in (\ref{eqn: dndef}),
\begin{align}
\E\bs{\norm{D_n}}^{1/2} \leq \frac{\cp \varepsilon^*}{\beta}
\end{align}
\end{lem}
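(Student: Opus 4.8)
The plan is to bound $\|D_n\|$ pathwise using the exponential stability assumption $(\dagger)$, the uniform bound $\|\varepsilon_{k+1}\| \leq \varepsilon^*$ from \eqref{eq: epbound}, and then take expectations. First I would start from the definition \eqref{eqn: dndef},
\[
D_n = \sum_{k=0}^{n-1}\int_{t_k}^{t_{k+1}}\Phi(t_n,s;\bar{x}(s),y(s))\varepsilon_{k+1}\ds,
\]
and apply the triangle inequality together with submultiplicativity of the operator norm to get
\[
\norm{D_n} \leq \sum_{k=0}^{n-1}\int_{t_k}^{t_{k+1}}\norm{\Phi(t_n,s;\bar{x}(s),y(s))}\,\norm{\varepsilon_{k+1}}\ds.
\]
By $(\dagger)$, $\norm{\Phi(t_n,s;\bar x(s),y(s))} \leq \cp e^{-\beta(t_n - s)}$, and by \eqref{eq: epbound}, $\norm{\varepsilon_{k+1}} \leq \varepsilon^*$. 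Substituting these gives a pathwise (deterministic) bound
\[
\norm{D_n} \leq \cp\varepsilon^* \sum_{k=0}^{n-1}\int_{t_k}^{t_{k+1}} e^{-\beta(t_n-s)}\ds = \cp\varepsilon^*\int_{t_0}^{t_n} e^{-\beta(t_n-s)}\ds.
\]

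Next I would evaluate the elementary integral $\int_{t_0}^{t_n} e^{-\beta(t_n - s)}\ds = \frac{1}{\beta}\bigl(1 - e^{-\beta(t_n - t_0)}\bigr) \leq \frac{1}{\beta}$, yielding the deterministic bound $\norm{D_n} \leq \frac{\cp\varepsilon^*}{\beta}$ for every $n$ and every sample path. Since the right-hand side is a constant, taking expectations gives $\E[\norm{D_n}^2] \leq \bigl(\frac{\cp\varepsilon^*}{\beta}\bigr)^2$ (recall the notation convention that $\norm{\cdot}$ inside $\E[\cdot]$ in \eqref{eqn: dev} denotes the squared norm), hence $\E[\norm{D_n}^2]^{1/2} \leq \frac{\cp\varepsilon^*}{\beta}$, which is the claimed inequality.

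There is essentially no obstacle here: this lemma is the easiest of the sequence because the perturbation term $\varepsilon_{k+1}$ is uniformly bounded, so no martingale concentration or moment estimation is needed, and the bound is purely deterministic. The only point requiring a modicum of care is bookkeeping the time variable — noting $s \in [t_k, t_{k+1}]$ ranges over $[t_0, t_n]$ as $k$ runs from $0$ to $n-1$, and that $\Phi$ is evaluated along the perturbed trajectory $(\bar x(s), y(s))$, for which $(\dagger)$ applies uniformly over all initial conditions $x_0, y_0$. I would also note in passing that the bound is independent of $a$ and $n$, which is consistent with the paper's goal of an all-time estimate.
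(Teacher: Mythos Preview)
Your proposal is correct and follows essentially the same approach as the paper: bound $\|D_n\|$ pathwise via the triangle inequality, $(\dagger)$, and \eqref{eq: epbound}, then integrate $e^{-\beta(t_n-s)}$ over $[t_0,t_n]$ to obtain the deterministic bound $\cp\varepsilon^*/\beta$, from which the expectation bound is immediate. The paper's proof is identical in structure and content.
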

\begin{proof}
We have
\begin{align}
\nom{D_n} & = \nom{\sum_{k=0}^{n-1}\int_{t_k}^{t_{k+1}}\Phi(t_n,s;\bar{x}(s),y(s))\varepsilon_{k+1}\ds} \nonumber \\
& \leq \varepsilon^* \sum_{k=0}^{n-1}\int_{t_k}^{t_{k+1}}\nom{\Phi(t_n,s;\bar{x}(s),y(s))}\ds \label{eq: dn1}\\
& \leq \cp \varepsilon^* \int_{t_0}^{t_{n}} e^{-\beta (t_n-s)}\ds \label{eq: dn2}\\
& \leq \frac{\cp \varepsilon^*}{\beta}, \nonumber
\end{align}
where (\ref{eq: dn1}) and (\ref{eq: dn2}) follow from (\ref{eq: epbound}) and \textbf{$(\dagger)$} respectively.
Therefore, for all $n\geq 0$, we have
\begin{align}
\E\bs{\norm{D_n}}^{1/2} \leq \frac{\cp \varepsilon^*}{\beta}.
\end{align}
\end{proof}
\subsection{Bound on $E_n$}
\begin{lem}For $E_n$ defined in (\ref{eqn: endef}),
\begin{align}
\E\bs{\norm{E_n}}^{1/2} \leq \frac{K_{\gamma}L_{\lambda}\cp\ep}{\beta}
\end{align}
where $K_{\gamma} := \max_{\|y\| \leq C_{\gamma}}\nom{\gamma(y)}$.
\end{lem}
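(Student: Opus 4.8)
The plan is to mimic exactly the proof of the bound on $D_n$ from the previous subsection, replacing the uniform bound $\varepsilon^*$ on $\|\varepsilon_{k+1}\|$ by a uniform bound on $\|\ep\nabla\lambda(\y{k})\gamma(\y{k})\|$. First I would write out $\|E_n\|$ from its definition (\ref{eqn: endef}), pull the norm inside the sum of integrals, and use submultiplicativity to split $\|\Phi(t_n,s;\bar x(s),y(s))\,\ep\nabla\lambda(\y{k})\gamma(\y{k})\|$ into $\|\Phi\|\cdot\ep\cdot\|\nabla\lambda(\y{k})\|\cdot\|\gamma(\y{k})\|$.

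Next I would bound the deterministic factor $\ep\|\nabla\lambda(\y{k})\|\,\|\gamma(\y{k})\|$ uniformly in $k$. Since $y(t)$ stays in the ball of radius $C_\gamma$ for all $t$ (so $\|\y{k}\|\leq C_\gamma$), and $\lambda$ has bounded first derivative with bound $L_\lambda$, we get $\|\nabla\lambda(\y{k})\|\leq L_\lambda$; and $\|\gamma(\y{k})\|\leq \max_{\|y\|\le C_\gamma}\|\gamma(y)\| = K_\gamma$ by definition (continuity of $\gamma$ plus compactness). Hence the integrand is bounded by $\ep K_\gamma L_\lambda \|\Phi(t_n,s;\bar x(s),y(s))\|$.

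Then I would apply assumption $(\dagger)$ to get $\|\Phi(t_n,s;\bar x(s),y(s))\|\leq \cp e^{-\beta(t_n-s)}$, so that
\begin{align*}
\|E_n\| \leq \ep K_\gamma L_\lambda \cp \int_{t_0}^{t_n} e^{-\beta(t_n-s)}\,ds \leq \frac{\ep K_\gamma L_\lambda \cp}{\beta},
\end{align*}
a deterministic bound independent of $n$. Since this holds surely, taking expectations (of the square root, which is again just the deterministic quantity to the power $1/2$) gives $\E[\|E_n\|]^{1/2}\leq \sqrt{\ep K_\gamma L_\lambda \cp/\beta}$ — though note the lemma as stated gives $K_\gamma L_\lambda\cp\ep/\beta$ without the square root, which only makes sense if the statement's left side is read the same way as in Lemma for $D_n$; I would match whatever convention the $D_n$ lemma used (there $\E[\|D_n\|]^{1/2}\leq \cp\varepsilon^*/\beta$ is stated directly from a deterministic bound, so the same reading applies here).

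There is essentially no obstacle: this is a routine deterministic estimate with no martingale or concentration machinery needed, since $\varepsilon_{k+1}$ has been replaced by a purely deterministic, uniformly bounded term. The only point requiring a moment's care is justifying the uniform bounds $\|\nabla\lambda(\y{k})\|\leq L_\lambda$ and $\|\gamma(\y{k})\|\leq K_\gamma$, which follow immediately from the standing assumptions that $\lambda$ has bounded first derivative, $y(\cdot)$ is confined to $\{\|y\|\leq C_\gamma\}$, and $\gamma$ is continuous (hence bounded on compacts).
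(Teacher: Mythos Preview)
Your proposal is correct and matches the paper's proof essentially line for line: bound the integrand by $\ep L_\lambda K_\gamma \|\Phi\|$, apply $(\dagger)$, and integrate the exponential. Your observation about the square-root convention is also apt; the paper uses the same convention as in the $D_n$ lemma, passing directly from the deterministic pointwise bound to the $L^2$-norm bound.
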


\begin{proof}
We have,
\begin{align}
\nom{E_n} & = \nom{\sum_{k=0}^{n-1}\int_{t_k}^{t_{k+1}}\Phi(t_n,s;x(s),y(s)) \ls \hqquad \hqquad \times \ep\nabla\lambda(\y{k})\gamma(\y{k})\ds} \nonumber\\
%& \leq \sum_{k=0}^{n-1}\int_{t_k}^{t_{k+1}} \nom{\Phi(t_n,s;x(s),y(s)) \ls \hqquad \hqquad \times \ep\nabla\lambda(\y{k})\gamma(\y{k})}\ds \nonumber \\
& \leq \ep \sum_{k=0}^{n-1}\int_{t_k}^{t_{k+1}} \nom{\Phi(t_n,s;x(s),y(s))} \ls \hqquad \times \nom{\nabla\lambda(\y{k})} \times \nom{\gamma(\y{k})}\ds,\\
& \leq \ep K_{\gamma}L_{\lambda}\sum_{k=0}^{n-1}\int_{t_k}^{t_{k+1}} \nom{\Phi(t_n,s;x(s),y(s))}\ds.
\end{align}
Using \textbf{$(\dagger)$},
\begin{align}
\nom{E_n} %& \leq \ep K_{\gamma}L_{\lambda}\int_{t_0}^{t_{n}} \nom{\Phi(t_n,s;x(s),y(s))}\ds\\
& \leq \ep K_{\gamma}L_{\lambda} \cp \int_{t_0}^{t_{n}} e^{-\beta (t_n-s)}\ds\\
& \leq \frac{K_{\gamma}L_{\lambda} \cp \ep }{\beta}.
\end{align}
Hence
\begin{align}
\E\bs{\norm{E_n}}^{1/2} & \leq \frac{K_{\gamma}L_{\lambda}\cp\ep}{\beta}.
\end{align}
\end{proof}
\subsection{Bound on $A_n$}
The next lemma is a variant of Lemma 5.5 of \cite{thoppe2015concentration}.

\begin{lem} For a suitable constant $K_1 > 0$,
\begin{align*}
\int_{t_k}^{t_{k+1}} e^{-\beta(t_n - s)}\nom{\bar{x}(s)-\x{k}}\ds & \qquad \\ \qquad \leq [K_1 &+G_{\tilde{h}}\nom{\x{k}} + \nom{M_{k+1}}]e^{-\beta(t_n-t_{k+1})}a^2.
\end{align*}
\end{lem}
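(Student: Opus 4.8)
The plan is to work on a single interval $[t_k, t_{k+1}]$, estimate $\|\bar x(s) - \bar x(t_k)\| = \|\bar x(s) - x_k\|$ for $s$ in that interval using the defining ODE \eqref{eqn: p1}, and then integrate the exponential weight. First I would write, for $s \in [t_k, t_{k+1}]$,
\[
\bar x(s) - x_k = \int_{t_k}^{s} \dot{\bar x}(\tau)\,d\tau = \int_{t_k}^{s}\bigl[\tilde h(\bar x(\tau), y(\tau)) + \Xi(\tau)\bigr]\,d\tau,
\]
and bound the integrand in norm. For the $\tilde h$ part I would use the linear-growth bound $\|\tilde h(\bar x(\tau),y(\tau))\| \le G_{\tilde h}(1 + \|\bar x(\tau)\|) + \ep L_\lambda K_\gamma$ (coming from linear growth of $h$ in $x$, boundedness of $\nabla\lambda$, and the definition $K_\gamma = \max_{\|y\|\le C_\gamma}\|\gamma(y)\|$), and then replace $\|\bar x(\tau)\|$ by $\|x_k\|$ up to an error controlled by the length-$a$ interval (a short Gr\"onwall-type argument on $[t_k,t_{k+1}]$, using that $a<1$, shows $\sup_{\tau\in[t_k,t_{k+1}]}\|\bar x(\tau)\| \le C(1+\|x_k\|)$ for a constant $C$ depending only on the Lipschitz/growth constants). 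For the perturbation part $\Xi(\tau) = \ze_1(\tau)+\ze_2(\tau)+\ze_3(\tau)+\ze_4(\tau)$ on $[t_k,t_{k+1}]$: $\ze_2(\tau)=M_{k+1}$ contributes $\|M_{k+1}\|$; $\ze_3(\tau)=\varepsilon_{k+1}$ contributes $\le \varepsilon^*$; $\ze_4(\tau) = -\ep\nabla\lambda(y_k)\gamma(y_k)$ contributes $\le \ep L_\lambda K_\gamma$; and $\ze_1(\tau) = \tilde h(x_k,y_k) - \tilde h(\bar x(\tau),y(\tau))$ is $O(a(1+\|x_k\|+\|M_{k+1}\|))$ by Lipschitzness of $\tilde h$ together with the already-established closeness of $\bar x(\tau)$ to $x_k$ and of $y(\tau)$ to $y_k$ over the short interval (recall $\dot y = \ep a \gamma(y)$ is very slow). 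Collecting, $\|\dot{\bar x}(\tau)\| \le K_1' + G_{\tilde h}\|x_k\| + \|M_{k+1}\|$ uniformly on $[t_k,t_{k+1}]$ for a suitable constant $K_1'$ absorbing $\varepsilon^*$, $\ep L_\lambda K_\gamma$, and the $O(a)$ terms (here $a<1$ lets me fold the $a$-factors into constants). Hence $\|\bar x(s) - x_k\| \le (s-t_k)\bigl[K_1' + G_{\tilde h}\|x_k\| + \|M_{k+1}\|\bigr] \le a\bigl[K_1' + G_{\tilde h}\|x_k\| + \|M_{k+1}\|\bigr]$.

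Next I would carry out the integration over $[t_k,t_{k+1}]$:
\[
\int_{t_k}^{t_{k+1}} e^{-\beta(t_n-s)}\|\bar x(s)-x_k\|\,ds \le \bigl[K_1' + G_{\tilde h}\|x_k\| + \|M_{k+1}\|\bigr]\,a\int_{t_k}^{t_{k+1}} e^{-\beta(t_n-s)}\,ds.
\]
Since $e^{-\beta(t_n-s)} \le e^{-\beta(t_n-t_{k+1})}$ for $s \le t_{k+1}$, the remaining integral is at most $a\,e^{-\beta(t_n-t_{k+1})}$, giving the claimed factor $a^2 e^{-\beta(t_n-t_{k+1})}$. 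Setting $K_1$ to be (an upper bound for) $K_1'$ — which by the symbol table is $L_{\tilde h}(1+C_h+C_\gamma)+\varepsilon^*$, i.e.\ exactly the collection of constant contributions from $\ze_3$, $\ze_4$, and the Lipschitz cross-terms — yields the stated inequality. I would double-check that the $O(a)$ remainder from $\ze_1$ and from the growth-bound slack is indeed dominated by this $K_1$ (this is where the precise form of $K_1$ in the symbol table is used), and that no hidden dependence on $n$ or $k$ enters the constants.

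The main obstacle I anticipate is the self-referential estimate for $\|\bar x(\tau)\|$ and $\|\bar x(\tau) - x_k\|$ on the interval: $\dot{\bar x}$ depends on $\bar x$ itself, so a clean bound requires a small Gr\"onwall / fixed-point argument on the short interval $[t_k,t_{k+1}]$ of length $a<1$, showing both that $\bar x$ stays within an $O(a(1+\|x_k\|+\|M_{k+1}\|))$-ball of $x_k$ and that the constants do not blow up. Once that local a priori bound is in hand, everything else is a routine triangle-inequality bookkeeping of the four perturbation pieces and one elementary exponential integral. A secondary subtlety is tracking that the interpolated $\bar x(\cdot)$ from \eqref{eqn: inter} agrees with the solution of \eqref{eqn: p1} closely enough that the ODE representation of $\bar x(s)-x_k$ is legitimate — but this is exactly how $\bar x(\cdot)$ was constructed in Section 2.3, so it can be invoked directly.
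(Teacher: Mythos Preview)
Your approach would work, but it is considerably more complicated than the paper's. The key simplification you are missing is that $\bar x(\cdot)$ is \emph{defined} as the piecewise-linear interpolant \eqref{eqn: inter}, so on $[t_k,t_{k+1}]$ one has the exact identity
\[
\|\bar x(s)-x_k\|=\frac{s-t_k}{a}\,\|x_{k+1}-x_k\|=(s-t_k)\,\bigl\|h(x_k,y_k)+M_{k+1}+\varepsilon_{k+1}\bigr\|
\]
directly from the update rule. Equivalently, in your ODE picture, note that $\tilde h(\bar x(\tau),y(\tau))+\ze_1(\tau)\equiv\tilde h(x_k,y_k)$ on $[t_k,t_{k+1}]$ by the very definition of $\ze_1$, so $\dot{\bar x}(\tau)$ is \emph{constant} on that interval and your integral representation collapses without any Gr\"onwall step or self-referential estimate. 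After that, a linear-growth bound and the elementary inequality $\int_{t_k}^{t_{k+1}}(s-t_k)e^{-\beta(t_n-s)}\,ds\le a^2 e^{-\beta(t_n-t_{k+1})}$ finish the proof in three lines, with $K_1=G_{\tilde h}(1+C_\gamma)+\varepsilon^*$.

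Your Gr\"onwall route on the length-$a$ interval would eventually reach the same conclusion, but it buys nothing here: the ``main obstacle'' you anticipate (the self-referential bound on $\|\bar x(\tau)\|$) is an artifact of not exploiting the piecewise-linear structure, and the circular-looking estimate of $\ze_1$ via the very closeness you are trying to prove disappears once you recognize the cancellation $\tilde h(\bar x,y)+\ze_1=\tilde h(x_k,y_k)$.
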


\begin{proof}
Using (\ref{eqn: inter}) and (\ref{SA}), for $s \in [t_k, t_{k+1}]$,
\begin{align}
\nom{\bar{x}(s)-\x{k}} & = \frac{s-t_k}{a}\nom{\x{k+1}-\x{k}} \nonumber \\
& = (s-t_k)[\nom{\tilde{h}(\x{k},\y{k}) + M_{k+1} + \varepsilon_{k+1}}] \nonumber\\
& \leq (s-t_k)[\nom{\tilde{h}(\x{k},\y{k})} + \nom{M_{k+1}} \ls + \nom{\varepsilon_{k+1}}] \nonumber\\
%& = (s-t_k)[\nom{\tilde{h}(\x{k},\y{k}) - \tilde{h}(\lambda(\y{k}),\y{k})} + \nom{M_{k+1}} \ls + \nom{\varepsilon_{k+1}}]\\
& \leq (s-t_k)[G_{\tilde{h}}(1+\nom{\x{k}} + \nom{\y{k}}) + \nom{M_{k+1}} + \varepsilon^*] \label{eq: htildebound}\\
& \leq (s-t_k)[G_{\tilde{h}}(1 + C_{\gamma}) + G_{\tilde{h}}\nom{\x{k}} + \nom{M_{k+1}} + \varepsilon^*] \nonumber\\
%& \leq (s-t_k)[\nom{\tilde{h}(\x{k},\y{k}) - \tilde{h}(\lambda(\y{k}),\y{k})} + \nom{M_{k+1}} + \varepsilon^*]\\
& \leq (s-t_k)[K_1 + G_{\tilde{h}}\nom{\x{k}}+ \nom{M_{k+1}}],
\end{align}
where $K_1 = G_{\tilde{h}}(1 + C_{\gamma})+\varepsilon^*$.
Also,
\begin{align*}
\int_{t_k}^{t_{k+1}} (s-t_k)e^{-\beta(t_n - s)} \ds \leq e^{-\beta(t_n-t_{k+1})}a^2.
\end{align*}
Therefore
\begin{align*}
& \int_{t_k}^{t_{k+1}} e^{-\beta(t_n - s)}\nom{\bar{x}(s)-\x{k}}\ds \ls \ \leq \\
&  \ \ \ \ \ \ \ \ \ \ [K_1 +G_{\tilde{h}}\nom{\x{k}}+ \nom{M_{k+1}}]e^{-\beta(t_n-t_{k+1})}a^2.
\end{align*}
\end{proof}
\begin{lem}For $A_n$ as defined in (\ref{eqn: andef}),
\[
\E\bs{\norm{A_n}}^{1/2} = O(a).
\]
\end{lem}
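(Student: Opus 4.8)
The plan is to bound $\E[\|A_n\|]^{1/2}$ by combining the pointwise bound from the previous lemma with the exponential decay of $\Phi$ and the moment bounds on $\|x_k\|$ and $\|M_{k+1}\|$. First I would start from the definition
\[
\|A_n\| \leq \sum_{k=0}^{n-1}\int_{t_k}^{t_{k+1}}\nom{\Phi(t_n,s;\bar x(s),y(s))}\,\nom{\tilde h(\x k,\y k)-\tilde h(\bar x(s),y(s))}\,\ds,
\]
and use the Lipschitz property of $\tilde h$ (uniformly in $y$ on the compact set where $y(\cdot)$ lives) to replace the integrand's second factor by $L_{\tilde h}\nom{\bar x(s)-\x k} + L_{\tilde h}\nom{y(s)-\y k}$; the $y$-difference is $O(a\ep)$ over one step by (\ref{slow}) and Lipschitzness of $\gamma$ on $\{\|y\|\le C_\gamma\}$, so it contributes a harmless $O(a)$ term after summation against $\cp e^{-\beta(t_n-s)}$. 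For the $\bar x$-difference I would apply assumption $(\dagger)$ to bound $\nom\Phi \le \cp e^{-\beta(t_n-s)}$ and invoke the preceding lemma (Lemma, the variant of Lemma 5.5 of \cite{thoppe2015concentration}) to get
\[
\|A_n\| \leq \cp L_{\tilde h}a^2\sum_{k=0}^{n-1}\bs{K_1 + G_{\tilde h}\nom{\x k} + \nom{M_{k+1}}}e^{-\beta(t_n-t_{k+1})} + (\text{the }O(a)\text{ } y\text{-term}).
\]

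Next I would take expectations. By the triangle inequality in $L^2$ (or simply by $\E[\|A_n\|]^{1/2}\le \sum \ldots$ after pulling the sum out and using $\E[\|x_k\|]\le C^*$, $\E[\nom{M_{k+1}}]\le \cm/\delta$ from (\ref{eq:mgsingleMoment}) with $m=1$, though here only the first moment is needed, or Cauchy--Schwarz to relate first moments to the available second moments), the random coefficient $K_1 + G_{\tilde h}\nom{\x k}+\nom{M_{k+1}}$ has expectation bounded by a constant $\cm_1 := K_1 + G_{\tilde h}C^* + \cm/\delta$ independent of $k$ and $n$. Then
\[
\E[\|A_n\|] \leq \cp L_{\tilde h}\,\cm_1\,a^2\sum_{k=0}^{n-1}e^{-\beta(t_n-t_{k+1})} + O(a),
\]
and the geometric sum $a\sum_{k=0}^{n-1}e^{-\beta a(n-1-k)} = a\sum_{j=0}^{n-1}e^{-\beta a j} \le a/(1-e^{-\beta a}) \le C/\beta$ for a universal constant $C$ (using $1-e^{-x}\ge x/2$ for small $x$, valid since $a<1$), leaves one extra factor of $a$, so $\E[\|A_n\|] = O(a)$. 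Finally $\E[\|A_n\|]^{1/2} = O(\sqrt a)$ would be the naive conclusion, so I need to be careful: the claim is $O(a)$, not $O(\sqrt a)$, which means the bound on $\E[\|A_n\|]$ must actually be $O(a^2)$. Re-examining, the $a^2$ from the one-step lemma times the $a^{-1}$-many terms that survive the geometric decay (effectively $O(1/(\beta a))$ terms) gives $a^2\cdot(1/a) = a$ for $\E[\|A_n\|]$, hence $O(\sqrt a)$ — unless one keeps the sum as $a^2\sum_k e^{-\beta a(n-1-k)}$ and bounds it by $a^2\cdot\frac{1}{1-e^{-\beta a}}\le a^2\cdot\frac{2}{\beta a}=\frac{2a}{\beta}$, giving $\E[\|A_n\|]=O(a)$ and $\E[\|A_n\|]^{1/2}=O(\sqrt a)$. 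So to land $O(a)$ for the square-root, I expect the argument must instead bound $\E[\|A_n\|^2]$ (or use that the one-step increment is itself $O(a)$ so $\nom{\bar x(s)-\x k}=O(a)$ pointwise and the surviving geometric mass is $O(1)$ not $O(1/a)$): writing $\|A_n\|\le \cp L_{\tilde h}\,(s-t_k)[\cdots]$ integrated gives an $a^2$ per term but only after the $e^{-\beta(t_n-t_{k+1})}$ sum, which over the step-discretized time is $\sum_j e^{-\beta a j}\le \frac{1}{\beta a}$, so $\|A_n\|=O(a^2\cdot\frac1{\beta a})=O(a/\beta)$ deterministically in the coefficient-bounded part, then $\E[\|A_n\|]^{1/2}$...

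The main obstacle, then, is tracking the exact power of $a$: reconciling the $a^2$ from the single-step bound against the $\Theta(1/a)$ effective number of terms in the decaying geometric sum so as to obtain $\E[\|A_n\|]^{1/2}=O(a)$ rather than merely $O(\sqrt a)$. I expect the resolution is that one does not take the square root of a first-moment bound of order $a$; rather, one bounds $\E[\|A_n\|]$ directly by $O(a^2)$. This happens because the geometric factor should be paired with the $a^2$ as $a\cdot\bigl(a\sum_j e^{-\beta a j}\bigr)=a\cdot O(1/\beta)=O(a)$ for $\E[\|A_n\|]$, giving $O(\sqrt a)$ — so in fact I suspect the lemma's $\E[\|A_n\|]^{1/2}=O(a)$ uses the full strength of the one-step estimate differently, perhaps splitting $A_n$ into a ``predictable'' part whose contribution telescopes with an extra $a$ and a martingale part handled via a Burkholder/Azuma-type argument (the concentration inequalities advertised in the introduction), the latter contributing the variance-scale $\sqrt{a^2\cdot a\cdot(\#\text{terms})}$-type bound. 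I would therefore organize the proof as: (i) Lipschitz reduction, (ii) apply $(\dagger)$ and the one-step lemma, (iii) take expectations using $C^*$ and (\ref{eq:mgsingleMoment}), (iv) evaluate the geometric sum carefully, being explicit that $a\sum_{j\ge 0}e^{-\beta a j}\le 2/\beta$, and conclude; if the telescoping/martingale split is needed to sharpen $O(\sqrt a)$ to $O(a)$, insert it at step (iii), controlling $\E\bs{\bigl\|\sum_k\int_{t_k}^{t_{k+1}}\Phi(\cdots)(\tilde h(\x k,\y k)-\tilde h(\bar x(s),y(s)))\ds\bigr\|^2}$ via orthogonality of the martingale-difference components of the increments.
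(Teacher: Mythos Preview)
Your steps (i)--(iv) are exactly the paper's proof, and they already yield the claimed bound; the confusion is purely notational. In this paper $\norm{\cdot}$ denotes the \emph{squared} norm (note that $\nom{\cdot}$ is used for the norm itself throughout), so the lemma asserts $\E[\|A_n\|^2]^{1/2}=O(a)$, i.e.\ an $L^2$ bound, not $(\E[\|A_n\|])^{1/2}=O(a)$. With that reading your pointwise estimate
\[
\|A_n\|\;\le\; a\cdot \cp L_{\tilde h}\Bigl[(K_1+K_\gamma a\ep)\,\tfrac{1}{\beta}\;+\;\sum_{k=0}^{n-1}(G_{\tilde h}\|\bar x(t_k)\|+\|M_{k+1}\|)\,a\,e^{-\beta(t_n-t_{k+1})}\Bigr]
\]
is exactly of the form $\|A_n\|\le a\cdot X_n$ with $X_n\ge 0$, so $\E[\|A_n\|^2]^{1/2}\le a\,\E[X_n^2]^{1/2}$. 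The triangle inequality in $L^2$ applied to the weighted sum, together with $\E[\|\bar x(t_k)\|^2]^{1/2}\le C^*$ and $\E[\|M_{k+1}\|^2]^{1/2}\le\sqrt{2C_M}/\delta$ from (\ref{eq:mgsingleMoment}), gives $\E[X_n^2]^{1/2}=O(1)$ because $\sum_k a\,e^{-\beta(t_n-t_{k+1})}\le 1/\beta$. Hence $\E[\|A_n\|^2]^{1/2}=O(a)$ directly.

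There is no missing factor of $a$, no need for a martingale/predictable decomposition, and no Burkholder-type argument; that entire digression in your proposal can be deleted. The paper's proof is precisely your (i)--(iv) with the $L^2$ triangle inequality at step (iii).
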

\begin{proof}
We have
\begin{align}
\nom{A_n} & = \nom{\sum_{k=0}^{n-1}\int_{t_k}^{t_{k+1}}\Phi(t_n,s;x(s),y(s)) \ls \hqquad \hqquad \times \bs{\tilde{h}(\x{k},\y{k})-\tilde{h}(\bar{x}(s),y(s))}\ds} \nonumber \\
& \leq \sum_{k=0}^{n-1}\int_{t_k}^{t_{k+1}}\nom{\Phi(t_n,s;x(s),y(s))} \ls \hqquad \nonumber \\
& \ \ \ \ \ \ \ \  \times \ \nom{\bs{\tilde{h}(\x{k},\y{k}) -\tilde{h}(\bar{x}(s),y(s))}}\ds \nonumber\\
& \leq \sum_{k=0}^{n-1}\int_{t_k}^{t_{k+1}} \cp e^{-\beta(t_n - s)}L_{\tilde{h}}\times \bs{\nom{\x{k} - \bar{x}(s)} \ls \hqquad \hqquad + \nom{\y{k} - y(s)}}\\
& \leq \cp L_{\tilde{h}}\sum_{k=0}^{n-1}\bbs{[K_1 +G_{\tilde{h}}\nom{\x{k}}+ \nom{M_{k+1}}]e^{-\beta(t_n-t_{k+1})}a^2 \statls \hqquad \hqquad + K_{\gamma}a\ep \int_{t_k}^{t_{k+1}}(s - t_k)e^{-\beta(t_n-s)}\ds} \label{eqn: an2}\\
& \leq \cp L_{\tilde{h}}\sum_{k=0}^{n-1}\bbs{[K_1 +G_{\tilde{h}}\nom{\x{k}}+ \nom{M_{k+1}}]e^{-\beta(t_n-t_{k+1})}a^2 \statls \hqquad \hqquad + K_{\gamma}a\ep e^{-\beta(t_n-t_{k+1})}a^2} \label{eqn: an3} \\
& = \cp L_{\tilde{h}}\sum_{k=0}^{n-1} [K_1 + G_{\tilde{h}}\nom{\x{k}}+ K_{\gamma}a\ep + \nom{M_{k+1}}]\ls \hqquad \hqquad \hqquad \hqquad \times e^{-\beta(t_n-t_{k+1})}a^2 \nonumber\\
& \leq a \cp L_{\tilde{h}}\bbc{(K_1 + K_{\gamma}a\ep) \mu \ls \hqquad \ + \nonumber \\
& \ \ \ \ \ \ \sum_{k=0}^{n-1}\bc{G_{\tilde{h}}\nom{\x{k}}+\nom{M_{k+1}}}a e^{-\beta(t_n-t_{k+1})}}, \nonumber
\end{align}
where $\mu := \frac{1}{\beta}$. Equation (\ref{eqn: an2}) follows from Lemma 3. Denote the terms $\cp L_{\tilde{h}}$, $K_1 + K_{\gamma}a\ep$, $\sum_{k=0}^{n-1} \nom{M_{k+1}}a e^{-\beta(t_n-t_{k+1})}$ and $\sum_{k=0}^{n-1}a\nom{\x{k}} e^{-\beta(t_n-t_{k+1})}$ by $K_2$, $K_3$, $F_n$ and $\tilde{F}_n$. Note that $K_3$ is $O(1)$. Then
\begin{align*}
\nom{A_n} & \leq aK_2\bc{K_3\mu + F_n + G_{\tilde{h}}\tilde{F}_n},\\
\E[\norm{A_n}]^{1/2} & \leq aK_2\bbc{K_3\mu + \E[F_n^2]^{1/2} + G_{\tilde{h}}\E[\tilde{F}_n^2]^{1/2}}.
\end{align*}
Now,
\begin{align*}
\E[\tilde{F}_n^2]^{1/2} & = \sum_{k=0}^{n-1}\bbc{a\E[\norm{\x{k}}]^{1/2}e^{-\beta(t_n-t_{k+1})}}\\
& \leq C^*\mu\numberthis \label{eq:fntildeBound}
\intertext{and}
\fls{\E[F_n^2]^{1/2}}
%& = \E\bbs{\bc{\sum_{k=0}^{n-1}\nom{M_{k+1}}ae^{-\beta(n-k)a}}^2}^{1/2}\\
& = \sum_{k=0}^{n-1}\E[\norm{M_{k+1}}]^{1/2}ae^{-\beta(n-k)a}\\
%& +2\sum_{i=0}^{n-1}\sum_{j=0}^{i-1}\E[\nom{M_{i+1}}\nom{M_{j+1}}]a^2 %\ls %\times
%e^{-\beta(n-i)a}e^{-\beta(n-j)a}\\
& \leq \sum_{i=0}^{n-1}\frac{\sqrt{2\cm}}{\delta}ae^{-\beta(n-k)a} \\
%+ 2\sum_{i=0}^{n-1}\sum_{j=0}^{i-1}\frac{\cm^2}{\delta^2}e^{-\beta(n-i)a}%e^{-\beta(n-j)a}\\
%& \leq K_4 \bbc{\sum_{i=0}^{n-1}ae^{-\beta(n-k)a}} %\ls
%+ 2\sum_{i=0}^{n-1}\sum_{j=0}^{i-1}e^{-\beta(n-i)a}e^{-\beta(n-j)a}}\\
%& \leq K_4 \bbc{\sum_{k=0}^{n-1}ae^{-\beta(n-k)a}}
&\leq K_4 \mu \numberthis \label{eq:fnsqrBound}
\end{align*}
where $K_4 = \frac{\sqrt{2\cm}}{\delta}$.
Therefore
\begin{align*}
\E[\norm{A_n}]^{1/2} & \leq aK_2\bbc{K_3\mu + K_4\mu + G_{\tilde{h}}C^*a\mu}.
\end{align*}
Hence $\E[\norm{A_n}]^{1/2} = O(a).$
\end{proof}
\subsection{Bound on $B_n - C_n$}

\begin{lem}[]For $B_n$ and $C_n$ defined in (\ref{eqn: bndef}) and (\ref{eqn: cndef})
$$\E[\norm{B_n-C_n}]^{1/2} = O(a).$$
\end{lem}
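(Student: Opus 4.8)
## Proof Strategy for the Bound on $B_n - C_n$

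The plan is to bound the difference $B_n - C_n$ by exploiting the fact that it is a sum over $k$ of terms, each involving the \emph{difference of two fundamental matrices} evaluated along different trajectories (namely $\Phi(t_n,s;\bar{x}(s),y(s))$ versus $\Phi(t_n,s;\x{k},\y{k})$) multiplied by the same martingale increment $M_{k+1}$. Writing
\[
B_n - C_n = \sum_{k=0}^{n-1}\int_{t_k}^{t_{k+1}}\bbs{\Phi(t_n,s;\bar{x}(s),y(s)) - \Phi(t_n,s;\x{k},\y{k})}M_{k+1}\ds,
\]
the first task is to obtain a pointwise bound on $\nom{\Phi(t_n,s;\bar{x}(s),y(s)) - \Phi(t_n,s;\x{k},\y{k})}$. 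Here I would use the linearized matrix ODE (\ref{eqn: matdiff}): both matrices solve $\dot\Phi = D(\cdot,\cdot)\Phi$ but with the coefficient matrix $D$ evaluated along two nearby arguments. The difference of the Jacobians $D$ is controlled by its Lipschitz constant $L_D$ times the distance between the two trajectory arguments, i.e. by $L_D(\nom{\bar{x}(s)-\x{k}} + \nom{y(s)-\y{k}})$. Feeding this into a Gronwall/variation-of-constants argument for the matrix difference equation, and using $(\dagger)$ to bound the individual $\Phi$'s by $\cp e^{-\beta(\cdot)}$, one gets a bound of the rough form $\cp^3 L_D e^{-\beta(t_n - s)}(t_n - s)\,(\nom{\bar{x}(s)-\x{k}} + \nom{y(s)-\y{k}})$ — note the extra polynomial-in-$(t_n-s)$ factor that always appears when differentiating exponential-stability estimates, which is exactly where constants $K_5, K_6$ (involving $\cp^3 L_D/\beta$ and $\cp^3 L_\gamma L_D \ep$) come from.

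Next I would insert the bound from Lemma 3 on $\nom{\bar{x}(s)-\x{k}} \le (s-t_k)[K_1 + G_{\tilde h}\nom{\x{k}} + \nom{M_{k+1}}]$ and the trivial bound $\nom{y(s)-\y{k}} \le K_\gamma a\ep(s-t_k)$ on the slow component, and carry out the $s$-integral over $[t_k,t_{k+1}]$. This produces an $a^2$ factor (from $(s-t_k)$ integrated over an interval of length $a$) and the term-by-term summand will have the shape $a^2 e^{-\beta(t_n - t_{k+1})}(\text{const} + G_{\tilde h}\nom{\x{k}} + \nom{M_{k+1}})\cdot\nom{M_{k+1}}$ — crucially now \emph{quadratic} in $\nom{M_{k+1}}$, unlike in $A_n$. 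Taking $\E[\,\cdot\,]^{1/2}$, using the triangle inequality for the $L^1$-norm, and then the moment bounds: $\E[\nom{M_{k+1}}^2] \le 2\cm/\delta^2$, $\E[\nom{M_{k+1}}^4] \le 24\cm/\delta^4$ (so $\E[\nom{M_{k+1}}^4]^{1/2} \le \sqrt{24\cm}/\delta^2$), together with the uniform fourth-moment bound $C^*$ on $\E[\nom{\x{k}}^4]^{1/4}$ and Cauchy–Schwarz to handle the cross term $\E[\nom{\x{k}}\nom{M_{k+1}}]$, each summand is $O(a^2)$ uniformly in $k$. Summing the geometric factor $\sum_k a\, e^{-\beta(t_n - t_{k+1})} \le \mu = 1/\beta$ leaves an overall $O(a)$, which is the claim. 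The constants $K_7, K_8$ from the notation table are precisely the aggregated moment constants appearing in this step.

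The main obstacle, I expect, is getting the difference bound on the two fundamental matrices cleanly — one must be careful that $\Phi(t_n,s;\x{k},\y{k})$ in the definition of $C_n$ is the fundamental matrix of the linearization \emph{along the ideal trajectory $z(\cdot)$ started appropriately}, not along the noisy iterate, so the two $\Phi$'s differ only through the initial data fed to the trajectory $z$, and the Lipschitz-in-initial-condition dependence of the flow must be invoked. This requires either an a priori bound showing the relevant trajectories stay in a compact set where $D$ is Lipschitz, or absorbing that into the standing assumptions; the paper's hypotheses (boundedness of derivatives of $h$, Lipschitzness, $(\dagger)$) are designed to make this go through. A secondary technical point is that the polynomial prefactor $(t_n-s)$ in the matrix-difference estimate must be controlled against the exponential $e^{-\beta(t_n-s)}$; this is routine since $\sup_{r\ge 0} r e^{-\beta r/2} < \infty$, so one can trade $(t_n - s)e^{-\beta(t_n-s)}$ for a slightly smaller exponential rate at the cost of a constant. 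Once these two points are handled, the rest is the same bookkeeping as in the proof of Lemma 4, and the $O(a)$ conclusion follows.
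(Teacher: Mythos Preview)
Your proposal is correct and follows essentially the same route as the paper: express the difference of fundamental matrices via the (linear) variation-of-constants formula applied column-by-column to the matrix ODE for $\Phi$, bound the resulting integrand using $(\dagger)$, the Lipschitz constant $L_D$, and the stable dependence of the $z$-flow on its initial data (another appeal to the generalized Alekseev formula), then insert Lemma~3 and finish with the moment bounds on $\nom{M_{k+1}}$, $\nom{M_{k+1}}^2$, and $\nom{\x{k}}\,\nom{M_{k+1}}$ exactly as you describe. The one cosmetic difference is that the paper evaluates the inner $t$-integral explicitly --- the extra $e^{-\beta(t-s)}$ coming from the exponential stability of the $z$-flow makes it $\le 1/(2\beta)$ --- so no polynomial prefactor $(t_n-s)$ ever appears and your anticipated second obstacle does not arise.
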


\begin{proof}
From (\ref{eqn: bndef}) and (\ref{eqn: cndef}) we have
\begin{align*}
\fls{B_n - C_n}
 & =\sum_{k=0}^{n-1}\int_{t_k}^{t_{k+1}}\bs{\Phi(t_n,s;\bar{x}(s),y(s)) \ls \hqquad -\Phi(t_n,s;\x{k},\y{k})}M_{k+1}\ds
 \end{align*}
 Therefore
 \begin{align*}
\fls{\nom{B_n - C_n}}
 & = \nom{\sum_{k=0}^{n-1}\int_{t_k}^{t_{k+1}}\bs{\Phi(t_n,s;\bar{x}(s),y(s)) \ls \hqquad -\Phi(t_n,s;\x{k},\y{k})}M_{k+1}\ds}\\
 & \leq \sum_{k=0}^{n-1}\int_{t_k}^{t_{k+1}}\nom{\Phi(t_n,s;\bar{x}(s),y(s)) \ls -\Phi(t_n,s;\x{k},\y{k})} \\
 & \ \ \ \ \ \ \ \ \ \ \ \ \ \ \ \ \ \times \ \nom{M_{k+1}}\ds.
\end{align*}
From (\ref{eqn: lin}), we know that $\Phi(t,s;\bar{x}(s),y(s))$ and $\Phi(t,s;\x{k},\y{k})$ are fundamental matrices for the linear systems given by: for $t \geq s$,
\begin{align}
\fls{\dot{\chi}(t,s;\bar{x}(s),y(s)) =} & D(z(t,s;\bar{x}(s),y(s)), y(t,s;y(s)))\chi(t,s;\bar{x}(s),y(s)), \label{eq: eqvar1}
\intertext{and}
\fls{\dot{\widetilde{\chi}}(t,s;\x{k},\y{k}) =} & D(z(t,s;\x{k},\y{k}), y(t,s;\y{k}))\widetilde{\chi}(t,s;\x{k},\y{k}). \label{eq: eqvar2}
\end{align}
So $\Phi(t,s;\bar{x}(s),y(s))$ and $\Phi(t,s;\x{k},\y{k})$ satisfy the following matrix valued differential equations
\begin{align}
\fls{\dot{\Phi}(t,s;\bar{x}(s),y(s)) =} & D(z(t,s;\bar{x}(s),y(s)), y(t,s;y(s)))\Phi(t,s;\bar{x}(s),y(s)), \label{eq: matdiff1}
\intertext{and}
\fls{\dot{\Phi}(t,s;\x{k},\y{k}) =} & D(z(t,s;\x{k},\y{k}), y(t,s;\y{k}))\Phi(t,s;\x{k},\y{k}). \label{eq: matdiff2}
\end{align}
For each column indexed by $j$, the differential equations (\ref{eq: matdiff1}) and (\ref{eq: matdiff2}) can be equivalently written as
\begin{align}
\fls{\dot{\Phi}_j(t,s;\bar{x}(s),y(s))} & = D(z(t,s;\bar{x}(s),y(s)), y(t,s;y(s)))\Phi_j(t,s;\bar{x}(s),y(s)), \label{eq: matdiff3}
\intertext{and}
\fls{\dot{\Phi}_j(t,s;\x{k},\y{k})} & = D(z(t,s;\bar{x}(s),y(s)), y(t,s;y(s)))\Phi_j(t,s;\x{k},\y{k}) \ + \ \statls \bs{D(z(t,s;\x{k},\y{k}), y(t,s;\y{k})) \statls - D(z(t,s;\bar{x}(s),y(s)), y(t,s;y(s)))} \statls \times \Phi_j(t,s;\x{k},\y{k}). \label{eq: matdiff4}
\end{align}
Treating (\ref{eq: matdiff4}) as a perturbation of (\ref{eq: matdiff3}) and applying Alexseev's formula (\ref{eq:alekseev})\footnote{in fact, the classical variation of constants formula for linear systems which it generalizes} to each column of $\Phi(\bullet,\bullet;\bullet,\bullet)$, we have
\begin{align}
& \Phi_j(t_n,s;\x{k},\y{k}) - \Phi_j(t_n,s;\bar{x}(s),y(s))
\statls = \int_{s}^{t_n} \Phi(t_n,t;\bar{x}(t),y(t))\statls \times \bs{D(z(t,s;\x{k},\y{k}),y(t,s;\y{k})) \ls -D(z(t,s;\bar{x}(s),y(s)),y(t,s;y(s)))}\statls \times \Phi_j(t,s;\x{k},\y{k})\dt \label{eq: matdiff5}
\intertext{Combining the equations (\ref{eq: matdiff5}) for all columns, we get}
& \Phi(t_n,s;\x{k},\y{k}) - \Phi(t_n,s;\bar{x}(s),y(s))
\statls = \int_{s}^{t_n} \Phi(t_n,t;\bar{x}(s),y(s))\statls \times \bs{D(z(t,s;\x{k},\y{k}),y(t,s;\y{k})) \ls -D(z(t,s;\bar{x}(s),y(s)),y(t,s;y(s)))}\statls \times \Phi(t,s;\x{k},\y{k})\dt \nonumber
\end{align}
Therefore
\begin{align}
\fls{\nom{B_n - C_n}}
%& \leq \sum_{k=0}^{n-1}\int_{t_k}^{t_{k+1}}\int_{s}^{t_n} \nom{\Phi(t_n,t;\bar{x}(s),y(s))\statls \times \bs{D(z(t,s;\x{k},\y{k}),y(t,s;\y{k})) \statls -D(z(t,s;\bar{x}(s),y(s)),y(t,s;y(s)))}\statls \times \Phi(t,s;\x{k},\y{k})} \times \nom{M_{k+1}}\dt ~\ds \label{eq: bncn0}\\
& \leq \sum_{k=0}^{n-1}\int_{t_k}^{t_{k+1}}\int_{s}^{t_n} \nom{\Phi(t_n,t;\bar{x}(s),y(s))}\statls \times \nom{\bs{D(z(t,s;\x{k},\y{k}),y(t,s;\y{k})) \statls -D(z(t,s;\bar{x}(s),y(s)),y(t,s;y(s)))}}\statls \times \nom{\Phi(t,s;\x{k},\y{k})} \times \nom{M_{k+1}}\dt ~\ds \nonumber \\
%\label{eq: bncn01}\\
& \leq\sum_{k=0}^{n-1}\int_{t_k}^{t_{k+1}}\int_{s}^{t_n}\cp^2L_D \times e^{-\beta(t_n-s)} \times e^{-\beta(t-s)} \times \statls \bbs{\nom{z(t,s;\bar{x}(s),y(s)) - z(t,s;\x{k},\y{k})} \statls + \nom{y(s)-\y{k}}} \times \nom{M_{k+1}}\dt~\ds \label{eq: bncn1}\\
& \leq\sum_{k=0}^{n-1}\int_{t_k}^{t_{k+1}}\int_{s}^{t_n}\cp^2L_D \times e^{-\beta(t_n-s)} \times e^{-\beta(t-s)} \times \statls \bbs{\bs{\nom{\bar{x}(s)-\x{k}} + \nom{y(s)-\y{k}}}\cp e^{-\beta(t-s)} \statls + \nom{y(s)-\y{k}} } \times \nom{M_{k+1}}\dt~\ds, \label{eq: bncn2}
\end{align}
% \\ & \leq\sum_{k=0}^{n-1}\hat{K}_4\bbs{ e^{-2\beta(t_n-t_{k+1})}a^2 \ls \times\bc{K_5+ \nom{M_{k+1}}}}\nom{M_{k+1}} \label{eq: bncn3},
% + \nom{\bar{x}(s)-\x{k}}
where %(\ref{eq: bncn0}) follows by applying Alekseev's formula column by column on the fundamental matrix given by the matrix valued differential equation (\ref{eqn: matdiff}) through an argument similar to \cite[Lemma 5.4]{thoppe2015concentration},
%(\ref{eq: bncn01}) is a direct consequence of matrix norm being sub-multiplicative,
(\ref{eq: bncn1}) follows from \textbf{$(\dagger)$} and the Lipschitz property of $D(\cdot,\cdot)$
%and sub additivity of $L_2$ norm over concatenation with $\hat{K}_4$ being some constant
while (\ref{eq: bncn2}) follows from (\ref{eq:genalekseev}) and $(\dagger)$. We split the analysis into two terms as follows:
\begin{align*}
G_n & := \sum_{k=0}^{n-1}\int_{t_k}^{t_{k+1}}\int_{s}^{t_n}\cp^3L_D e^{-\beta(t_n-s)}\ls \nom{\bar{x}(s)-\x{k}} \times e^{-2\beta(t-s)} \times \nom{M_{k+1}}\dt~\ds\\
& = \sum_{k=0}^{n-1}\int_{t_k}^{t_{k+1}}\cp^3L_De^{-\beta(t_n-s)}\nom{\bar{x}(s)-\x{k}}\ls\times\frac{1-e^{-2\beta(t_n-s)}}{2\beta} \times \nom{M_{k+1}}\ds\\
& = K_5\sum_{k=0}^{n-1}\int_{t_k}^{t_{k+1}}e^{-\beta(t_n-s)} \times \nom{\bar{x}(s)-\x{k}} \ls\times \bc{1-e^{-2\beta(t_n-s)}}\times \nom{M_{k+1}}\ds \\
& \leq K_5\sum_{k=0}^{n-1}\int_{t_k}^{t_{k+1}}e^{-\beta(t_n-s)}\nom{\bar{x}(s)-\x{k}}\times\nom{M_{k+1}}\ds\\
& \leq K_5\sum_{k=0}^{n-1}a^2 e^{-\beta(t_n-t_{k+1})}\bs{K_1\nom{M_{k+1}} + \norm{M_{k+1}}+G_{\tilde{h}}\nom{\x{k}} \ \nom{M_{k+1}}} \numberthis \label{eq: gnXbound}
\intertext{where $K_5$ denotes $\cp^3L_D/2\beta$ and (\ref{eq: gnXbound}) follows from Lemma 3,}
H_n & := \sum_{k=0}^{n-1}\int_{t_k}^{t_{k+1}}\int_{s}^{t_n}\cp^3L_D \times e^{-\beta(t_n-s)} \nom{y(s)-\y{k}}\ls e^{-2\beta(t-s)}\nom{M_{k+1}}\dt\ds\\
& + \  \sum_{k=0}^{n-1}\int_{t_k}^{t_{k+1}}\int_{s}^{t_n}\cp^2L_D \times e^{-\beta(t_n-s)} \nom{y(s)-\y{k}}\ls e^{-\beta(t-s)}\nom{M_{k+1}}\dt\ds\\
& = \sum_{k=0}^{n-1}\int_{t_k}^{t_{k+1}}\cp^3L_D e^{-\beta(t_n-s)}\nom{y(s)-\y{k}} \ls \bbs{\frac{1}{2\beta} \bc{1-e^{-2\beta(t_n-s)}}}\nom{M_{k+1}}\ds\\
& + \ \sum_{k=0}^{n-1}\int_{t_k}^{t_{k+1}}\cp^2L_D e^{-\beta(t_n-s)}\nom{y(s)-\y{k}} \ls \bbs{\frac{1}{\beta} \bc{1-e^{-\beta(t_n-s)}}}\nom{M_{k+1}}\ds\\
& \leq \sum_{k=0}^{n-1}\int_{t_k}^{t_{k+1}}\frac{\cp^2(\frac{1}{2} + \cp)L_D}{\beta}e^{-\beta(t_n-s)}K_{\gamma}a\ep (s-t_k)\nom{M_{k+1}}\ds \\
& \leq K_6\sum_{k=0}^{n-1}e^{-\beta(t_n-t_{k+1})}a^3\nom{M_{k+1}} \numberthis \label{eq: hnKbound}
\end{align*}
where (\ref{eq: hnKbound}) follows from (\ref{slow}) and $K_6 :=(\frac{1}{2}+ \cp)\cp^2K_{\gamma}L_D\ep/\beta$. Further define $G_{1,n}$, $G_{2,n}$ and $G_{3,n}$ as follows
\begin{align*}
G_{1,n} & = \sum_{k=0}^{n-1}a e^{-\beta(t_n-t_{k+1})}\nom{M_{k+1}},\\
G_{2,n} & = \sum_{k=0}^{n-1}a e^{-\beta(t_n-t_{k+1})}\norm{M_{k+1}},\\
G_{3,n} & = \sum_{k=0}^{n-1}a e^{-\beta(t_n-t_{k+1})}\nom{\x{k}} \ \nom{M_{k+1}}.
%H_{1,n} & = \sum_{k=0}^{n-1}ae^{-\beta(t_n-t_{k+1})}\nom{M_{k+1}}.
\end{align*}
Then
\begin{align*}
\nom{B_n - C_n} & \leq G_n + H_n \\
& \leq K_5a(K_1G_{n,1} + G_{n,2}+G_{\tilde{h}}G_{n,3}) + \ls \bc{K_6a^2G_{n,1}}.\\
\intertext{Therefore}
\E[\norm{B_n - C_n}]^{1/2} & \leq K_5a\bc{K_1\E[G_{n,1}^2]^{1/2} + \E[G_{n,2}^2]^{1/2} +G_{\tilde{h}}\E[G_{n,3}^2]^{1/2}}+ \statls a^2K_6\E[G_{n,1}^2]^{1/2}
\end{align*}
We now bound each of the terms in the previous expression. Using a calculation similar to the one used for (\ref{eq:fnsqrBound}), we have
\begin{align*}
& \E[G_{n,1}^2]^{1/2} \leq K_4\mu, \numberthis \label{eq:gn1hn1Bound}\\
& \E[G_{n,2}^2]^{1/2}  = \sum_{k=0}^{n-1} E[||M_{k+1}||^4]^{1/2} a e^{-\beta(n-k)a} \\
& \leq \sum_{k=0}^{n-1} \dfrac{\sqrt{24\cm}}{\delta^2} a e^{-\beta(n-k)a}
\\
& \leq K_7 \bbc{\sum_{k=0}^{n-1} a e^{-\beta(n-k)a}} \\
% + 2\sum_{i=0}^{n-1} \sum_{j=0}^{i-1} e^{-\beta(n-i)a}e^{-\beta(n-j)a}} \\
%& = K_7 \bbc{\sum_{k=0}^{n-1} a e^{-\beta(n-k)a}}^2 \\
& \leq K_7\mu, \numberthis \label{eq:gn2Bound}  \\
\intertext{where $K_7 = \sqrt{24\cm}/ \delta^2$,} \\
\E[G_{n,3}^2]^{1/2} & = \E\bbs{\sum_{k=0}^{n-1}a e^{-\beta(t_n-t_{k+1})}E\left[\norm{\x{k}}\norm{M_{k+1}}\right]^{1/2}}\\
& \leq \sum_{k=0}^{n-1}a e^{-\beta(t_n-t_{k+1})}\bbc{\E\bs{\nom{\x{k}}^4}\E\bs{\nom{M_{k+1}}^4}}^{1/4}\\
& \leq \sum_{k=0}^{n-1}a e^{-\beta(t_n-t_{k+1})}C^*\bbc{\E\bs{\nom{M_{k+1}}^4}}^{1/2}\\
& \leq \sum_{k=0}^{n-1}ae^{-\beta(t_n-t_{k+1})}C^*\left(\frac{(24\cm)^{1/4}}{\delta}\right)\\
& \leq K_8\mu, \numberthis \label{eq:gn3Bound}
\end{align*}
where $K_8 = \dfrac{C^*(24\cm)^{1/4}}{\delta}$.

Using (\ref{eq:gn1hn1Bound}), (\ref{eq:gn2Bound}) and (\ref{eq:gn3Bound}), we have
\begin{align*}
\E[\norm{B_n - C_n}]^{\frac{1}{2}} & \leq K_5a\bc{K_1K_4\mu + K_7\mu + G_{\tilde{h}}K_8\mu }+ a^2K_6K_8\mu\\
& = O(a) \numberthis \label{eq:bncnBound}.
\end{align*}
\end{proof}

\subsection{Bound on $C_n$}

\begin{lem}[]For $C_n$ defined in (\ref{eqn: cndef})
$$\E[\norm{C_n}]^{1/2} = O(\max \{a^{1.5}d^{3.25}, a^{0.5}d^{2.5}\}).$$
\end{lem}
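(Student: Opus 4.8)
The plan is to bound $C_n = \sum_{k=0}^{n-1}\int_{t_k}^{t_{k+1}}\Phi(t_n,s;\x{k},\y{k})M_{k+1}\ds$ by exploiting the martingale structure of $\{M_{k+1}\}$, which $A_n$, $B_n-C_n$, etc.\ did not. The key observation is that $\Phi(t_n, s; \x{k}, \y{k})$ is \emph{not} $\F_k$-measurable — it depends on the linearized dynamics run from time $s$ to $t_n$, hence on $y(\cdot)$ which is deterministic but also on the equilibrium trajectory $z(t,s;\x{k},\y{k})$. However, $\Phi(t_n,s;\x{k},\y{k})$ depends on the randomness only through $(\x{k},\y{k})$, so conditionally on $\F_k$ the matrix is a fixed (deterministic) function of $\x{k}$, and $M_{k+1}$ is conditionally a martingale difference. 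So I would write $C_n = \sum_k W_k$ where $W_k := \int_{t_k}^{t_{k+1}}\Phi(t_n,s;\x{k},\y{k})\ds\, M_{k+1}$ (pulling $M_{k+1}$ out of the $s$-integral since it is constant in $s$ on $[t_k,t_{k+1}]$), and observe that $\{W_k\}$ is a martingale difference sequence with respect to $\{\F_{k+1}\}$: indeed $\E[W_k \mid \F_k] = \Psi_k \E[M_{k+1}\mid\F_k] = 0$ where $\Psi_k := \int_{t_k}^{t_{k+1}}\Phi(t_n,s;\x{k},\y{k})\ds$ is $\F_k$-measurable.

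Next I would apply a martingale concentration / moment inequality — presumably the one recalled in the appendix, or a Burkholder–Davis–Gundy-type bound — to control $\E[\|C_n\|^2]^{1/2}$ (and possibly $\E[\|C_n\|^4]^{1/4}$) in terms of the conditional second moments of the increments. Concretely, $\E[\|C_n\|^2] = \sum_k \E[\|W_k\|^2]$ by orthogonality of martingale differences, and $\|W_k\| \le \|\Psi_k\|\,\|M_{k+1}\|$ with $\|\Psi_k\| \le \int_{t_k}^{t_{k+1}}\cp e^{-\beta(t_n-s)}\ds \le \cp a\, e^{-\beta(t_n-t_{k+1})}$ by $(\dagger)$. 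Then $\E[\|W_k\|^2] \le \cp^2 a^2 e^{-2\beta(t_n-t_{k+1})}\E[\|M_{k+1}\|^2] \le \cp^2 a^2 e^{-2\beta(t_n-t_{k+1})}\cdot 2\cm/\delta^2$ using (\ref{eq:mgsingleMoment}). Summing the geometric series $\sum_k a^2 e^{-2\beta(t_n-t_{k+1})} = a\cdot a\sum_k e^{-2\beta(n-k)a} \le a \cdot \frac{1}{1-e^{-2\beta a}}\cdot a \approx a/(2\beta)$, I get $\E[\|C_n\|^2] = O(a)$, so $\E[\|C_n\|^2]^{1/2} = O(a^{1/2})$ — but wait, the claimed bound has an $a^{1.5}$ term too, which must come from combining a fourth-moment estimate (needed elsewhere) or from a sharper martingale inequality that yields $O(a)$ on $\E[\|C_n\|]^{1/2}$ in the regime where the other term dominates; I would need to carry through both the $L^2$ and $L^4$ chains and take the appropriate form, tracking the dimension $d$ carefully since the martingale concentration inequality for vector-valued (or matrix-acted) increments typically loses a factor polynomial in $d$ — each coordinate of $M_{k+1}$ contributes, and bounding $\|\Phi M_{k+1}\|$ coordinate-wise picks up $\sqrt{d}$ or $d$ factors, explaining the $d^{2.5}$ and $d^{3.25}$ exponents.

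The main obstacle, and the reason this lemma is separated out and gets a worse (and $d$-dependent) rate, is the dimension bookkeeping in the martingale concentration step: one must decide whether to apply a scalar concentration inequality coordinate-by-coordinate to $\langle e_j, C_n\rangle$ and then combine via $\|C_n\|^2 = \sum_j \langle e_j, C_n\rangle^2$, or use a genuinely vector-valued inequality; each route inserts different powers of $d$, and one must also pass the exponential moment bound $\E[e^{\delta\|M_{n+1}\|}]\le\cm$ through the matrix action $\Phi$ — since $\|\Phi v\| \le \|\Phi\|\,\|v\|$ but $\|\Phi\| \le \cp$ is an operator-norm bound that may itself hide a $\sqrt{d}$ when converting from the entrywise Lipschitz/boundedness hypotheses on $D$, and the concentration inequality in the appendix may require a bound on $\sum_k \|\Psi_k M_{k+1}\|^2$ or on the conditional sub-exponential parameters, each scaling with $d$. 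So the plan is: (i) rewrite $C_n$ as a martingale sum $\sum_k W_k$ with $\F_k$-measurable coefficients $\Psi_k$; (ii) bound $\|\Psi_k\|$ via $(\dagger)$; (iii) invoke the appendix's martingale moment/concentration inequality to pass from increment bounds to a bound on $\E[\|C_n\|^2]^{1/2}$ and $\E[\|C_n\|^4]^{1/4}$, being scrupulous about the $d$-powers introduced; (iv) sum the resulting geometric series in $k$ to extract the factors of $a$; (v) assemble into the stated $O(\max\{a^{1.5}d^{3.25}, a^{0.5}d^{2.5}\})$ bound, the two regimes reflecting whether the leading contribution comes from the second- or fourth-moment chain (or equivalently from the ``bulk'' versus ``tail'' part of the sub-exponential concentration inequality).
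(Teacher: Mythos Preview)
Your core idea --- writing $C_n = \sum_{k=0}^{n-1}\Psi_k M_{k+1}$ with $\Psi_k := \int_{t_k}^{t_{k+1}}\Phi(t_n,s;\x{k},\y{k})\,ds$ an $\F_k$-measurable matrix, and then using orthogonality of martingale differences in $\mathcal{R}^d$ to get $\E[\|C_n\|^2]=\sum_k\E[\|\Psi_kM_{k+1}\|^2]$ --- is correct and in fact yields a \emph{cleaner} bound than the one stated: your computation gives $\E[\|C_n\|^2]^{1/2}=O(a^{1/2})$ with no dimension factor at all, which certainly implies the lemma as stated.

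The paper, however, does \emph{not} take this route. It computes $\E[\|C_n\|^2]=\int_0^\infty P(\|C_n\|>\sqrt{s})\,ds$ and bounds the tail probability directly using the concentration inequality of the appendix (Theorem~\ref{thm:concineq}), with $\alpha_{k,n}=\Psi_k$, $\gamma_1=\cp/\beta$, $\gamma_2=1$, $\beta_n=a$. That inequality has two regimes --- a sub-Gaussian tail $2d^2\exp(-c\eta^2/(d^3\beta_n))$ for small $\eta$ and a sub-exponential tail $2d^2\exp(-c\eta/(d^{1.5}\beta_n))$ for large $\eta$ --- and the $\max\{a^{1.5}d^{3.25},\,a^{0.5}d^{2.5}\}$ structure, together with all the $d$-powers, comes entirely from integrating these two pieces separately. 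So your speculation that the second term arises from a fourth-moment chain is off the mark: there is no $L^4$ computation for $C_n$ in the paper, only the two tail regimes of the appendix inequality.

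In short: your elementary orthogonality argument is a genuinely different and sharper proof (dimension-free $O(a^{1/2})$), while the paper's proof is an application of the appendix concentration inequality via tail integration, which is what produces the stated $d$-dependent $\max$ bound. If your goal is to reproduce the paper's exact statement, you need only note that $O(a^{1/2})$ is contained in the stated $O(\cdot)$; if your goal is to understand where the paper's constants come from, follow the tail-integration route instead.
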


\begin{proof}
%For
%\begin{align*}
%C_n & = \sum_{k=0}^{n-1}\int_{t_k}^{t_{k+1}}\Phi(t_n,s;\x{k},\y{k})M_{k+1}\ds,
%\end{align*}
%define
It is easy to verify that $C_n$ satisfies the condition for the martingale concentration inequality provided in Theorem \ref{thm:concineq} in Appendix, with
\begin{align*}
\alpha_{k,n} & = \int_{t_k}^{t_{k+1}} \Phi(t_n,s,\x{k},\y{k})\ds,\\
\gamma_1 & = \frac{\cp}{\beta}, \ \gamma_2 = 1, \
\beta_n  = a,
\end{align*}
for $k, n \geq 0$. Thus,
\begin{align*}
E[||C_n||^2] & = \int_{0}^{\infty} P(\norm{C_n} \geq s)\ds\\
& = \int_{0}^{\infty} P(\nom{C_n} \geq \sqrt{s} )\ds
\end{align*}
Using the martingale concentration inequality provided in \ref{thm:concineq} in Appendix, we have
\begin{align}
E[||C_n||^2] & = \int_{0}^{K_{9}} 2d^2\text{exp}\Big(\dfrac{-cs}{d^3a}\Big)\ds \nonumber \\
& +\int_{K_{9}}^{\infty} 2d^2\text{exp}\Big(\dfrac{-c\sqrt{s}}{d^{3/2}a}\Big)\ds,   \label{eq:cnIntermediate}
\end{align}
where $K_{9} = \dfrac{\cm\gamma_1d^{1.5}}{\delta}$. Analysing the  terms separately, we have
\begin{align*}
\int_{0}^{K_{9}} 2d^2\text{exp}\Big(\dfrac{-cs}{d^3a}\Big)\ds &= \dfrac{2d^5a}{c}\Big(1-\text{exp}\Big(\dfrac{-cK_{9}}{d^3a}\Big)\Big) \\
&\leq \Big(\dfrac{2d^5}{c}\Big)a,  \numberthis \label{eq:sn1}
\end{align*}
and
\begin{align*}
\fls{\int_{K_{9}}^{\infty} 2d^2\text{exp}\Big(\dfrac{-c\sqrt{s}}{d^{3/2}a}\Big)\ds}
&= \dfrac{4d^5a}{c^2}\text{exp}\Big(\frac{-c\sqrt{K_{9}}}{d^{3/2}a}\Big)\Big(a + \dfrac{c\sqrt{K_{9}}}{d^{3/2}}\Big) \\
&\leq \dfrac{4d^5a}{c^2}\Big(\dfrac{ad^{3/2}}{c\sqrt{K_{9}}}\Big)\Big(a + \dfrac{c\sqrt{K_{9}}}{d^{3/2}}\Big) \numberthis \label{eq:expBound} \\
&= O(\max \{a^3d^{6.5}, a^2d^5\}), \numberthis \label{eq:sn2} %O(\max a^3d^{6.5}, a^2d^5) % O(a^2)
\end{align*}
where (\ref{eq:expBound}) follows from the fact that $e^{-1/a}\leq a$ for $a > 0$. From (\ref{eq:sn1}) and (\ref{eq:sn2}), we have
\begin{align*}
\E[\norm{C_n}] & \leq \bbc{\frac{2d^5}{c}}a + O(\max \{a^3d^{6.5}, a^2d^5\})\\
& = O(\max \{a^3d^{6.5}, ad^5\})\\
\therefore  \ \ \E[\norm{C_n}]^{1/2} & = O(\max \{a^{1.5}d^{3.25}, a^{0.5}d^{2.5}\}).
\end{align*}
\end{proof}

%MainThm
%%%%%%%%%%%%%%%%%%%%%%%%%%%%%%%%%%%%%%%%%%%%%%%%%%%%%%%%

\section{Main result}

Combining the foregoing bounds leads to our main result stated as follows.\\

\begin{thm}
\label{thm: mainthm}
The mean square deviation of tracked iterates from a non-stationary trajectory satisfies:
\begin{align}
\label{eqn: main}
\fls{\E\bs{\norm{x_{n}- \lambda(y(n))}}^{1/2}}
& \leq \frac{\cp \varepsilon^*}{\beta} + \frac{K_{\gamma}L_{\lambda}\cp \ep}{\beta} \statls + O(\max \{a^{1.5}d^{3.25}, a^{0.5}d^{2.5}\}) \statls \revred{ + \cp e^{-\beta(t_n-t_0)}\nom{x_0 - \lambda(y(0))}}
\end{align}
\end{thm}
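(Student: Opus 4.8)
The plan is to simply assemble the five (now six) term-wise bounds already established in the lemmas of Section~3, applied to the decomposition \eqref{eqn: dev}. Recall that we showed
\[
\E\bs{\norm{\x{n}-z(t_n)}}^{1/2} \leq \E\bs{\norm{A_n}}^{1/2} + \E\bs{\norm{E_n}}^{1/2} + \E\bs{\norm{C_n}}^{1/2} + \E\bs{\norm{D_n}}^{1/2} + \E\bs{\norm{B_n - C_n}}^{1/2} + \E\bs{\norm{\varrho_n}}^{1/2},
\]
where $z(t_n) = \lambda(y(n))$ by definition of $z(\cdot)$ and of the sampling $y_n = y(n)$. So the left-hand side is exactly $\E\bs{\norm{x_n - \lambda(y(n))}}^{1/2}$, and no approximation is incurred: this is the content of the fourth highlighted contribution, that the error bounded is the \emph{exact} error given by the Alekseev formula.

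The key steps, in order: (i) substitute Lemma~1 to get $\E\bs{\norm{D_n}}^{1/2} \leq \cp\varepsilon^*/\beta$; (ii) substitute Lemma~2 to get $\E\bs{\norm{E_n}}^{1/2} \leq K_\gamma L_\lambda \cp \ep/\beta$; (iii) substitute Lemma~4 for $\E\bs{\norm{A_n}}^{1/2} = O(a)$ and Lemma~5 for $\E\bs{\norm{B_n - C_n}}^{1/2} = O(a)$, both of which are dominated by the $C_n$ term; (iv) substitute Lemma~6 for $\E\bs{\norm{C_n}}^{1/2} = O(\max\{a^{1.5}d^{3.25}, a^{0.5}d^{2.5}\})$, which absorbs the two $O(a)$ contributions since $a^{0.5}d^{2.5} \geq a$ for $a \leq 1$ and $d \geq 1$; and (v) bound the initial-condition term $\varrho_n = \Phi(t_n,t_0;\bar x(t_0),y_0)(\bar x(t_0) - z(t_0))$ directly using assumption $(\dagger)$: since $\nom{\Phi(t_n,t_0;\bar x(t_0),y_0)} \leq \cp e^{-\beta(t_n - t_0)}$ and $\bar x(t_0) = x_0$, $z(t_0) = \lambda(y(0))$, we get $\nom{\varrho_n} \leq \cp e^{-\beta(t_n-t_0)}\nom{x_0 - \lambda(y(0))}$, which is deterministic, so $\E\bs{\norm{\varrho_n}}^{1/2}$ equals this same quantity. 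Collecting (i)--(v) and folding the $O(a)$ terms into the $C_n$ estimate yields \eqref{eqn: main}.

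There is no genuine obstacle at this stage — all the analytical work has been done in the lemmas, and the theorem is purely a bookkeeping step. The only minor point to be careful about is the big-$O$ consolidation in step~(iv): one should check that each of the constants multiplying $a$ in Lemmas~4 and~5 is independent of $n$ (it is, since all the geometric sums in $t_n - t_{k+1}$ were bounded by $\mu = 1/\beta$ uniformly in $n$), so that the resulting bound is genuinely uniform over $0 \leq n < \infty$ as claimed. One could optionally remark that the first two terms on the right of \eqref{eqn: main} constitute the irreducible floor of the tracking error (driven by the numerical error $\varepsilon^*$ and the drift rate $\ep$ of the target), the $O(\cdot)$ term is the discretization-plus-noise contribution that vanishes as $a \downarrow 0$, and the last term is the transient from the initial condition that decays exponentially in $t_n$.
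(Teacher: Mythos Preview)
Your proposal is correct and follows essentially the same approach as the paper: invoke the decomposition \eqref{eqn: dev}, plug in Lemmas~1--6 for the five error terms, bound $\varrho_n$ directly via assumption $(\dagger)$, and absorb the two $O(a)$ contributions into the dominant $O(\max\{a^{1.5}d^{3.25}, a^{0.5}d^{2.5}\})$ term. If anything, your version is slightly more explicit about why the $O(a)$ terms are dominated and why the bound is uniform in $n$.
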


\begin{proof}
Using (\ref{eqn: dev}), ($\dagger$) and lemmas 1-6, we get
\begin{align*}
\E\bs{\norm{\x{n}-{z}(t_n)}}^{1/2} & \leq \frac{\cp \varepsilon^*}{\beta} + \frac{K_{\gamma}L_{\lambda}\cp \ep}{\beta} + O(a) \statls + O(\max \{a^{1.5}d^{3.25}, a^{0.5}d^{2.5}\}) \statls  \revred{ + \cp e^{-\beta(t_n-t_0)}\nom{\bar{x}(t_0) - z(t_0)}}\\
& = \frac{\cp \varepsilon^*}{\beta} + \frac{K_{\gamma}L_{\lambda}\cp \ep}{\beta} \statls + O(\max \{a^{1.5}d^{3.25}, a^{0.5}d^{2.5}\}) \statls \revred{ + \cp e^{-\beta(t_n-t_0)}\nom{\bar{x}(t_0) - z(t_0)}}.
\end{align*}
The claim follows.
\end{proof}

\bigskip

\textbf{Remark}: 1. The $O(\cdot)$ notation is used above to isolate the dependence on the stepsize $a$. The exact constants involved are available in the relevant lemmas, but are suppressed in order to improve clarity.\\

2. The linear complexity of the error bound in $\varepsilon^*$ and $\ep$ is natural to expect, these being contributions from bounded additive error component $\varepsilon_n$ and rate of variation of the tracking signal, respectively. The $O(\cdot)$ term is due to the  martingale noise  and discretization. \revred{ The last term accounts for the effect of initial condition.}\\

3. By setting $\ep = 0$ in (\ref{eqn: main}), we can recover as a special case a bound valid for  all time  for a stationary target. Then $y(\cdot) \equiv y^*$, a constant, and $z(\cdot) \equiv x^* = \lambda(y^*),$ also a  constant, viz., an equilibrium  for the system $\dot{x}(t) = h(x(t), y^*)$.

\section{Conclusion and Future Work}
We analyzed a constant step-size stochastic approximation algorithm for tracking a slowly varying dynamical system and obtained a \textit{non-asymptotic} bound \textit{valid for  all time}, with dependence on step-size and dimension explicitly given. The latter in particular provides  insight into step-size selection in high dimensional regime.\\

A natural extension would be to the problem of tracking a stochastic dynamics. Indeed, a suitable extension of Alekseev's formula is available for this purpose \cite{Zerihun}, which is much more complex.

%Appendix 1
%%%%%%%%%%%%%%%%%%%%%%%%%%%%%%%%%%%%%%%%%%%%%%%%%%%%%%%%

\bigskip

\bigskip

\newpage

\noindent \textbf{\large Appendix: A martingale concentration inequality }\\

\bigskip

We state here the martingale concentration inequality we have used, from \cite{thoppe2015concentration}, which in turn is a slight adaptation of the results of \cite{LiuWatbled}.\\

\begin{thm} \label{thm:concineq}
Let $S_n = \sum_{k=1}^{n}\alpha_{k,n}X_k$, where ${X_k}$ is a $\mR^d$ valued ${\mathcal{F}_k}$ - adapted martingale difference sequence and ${\alpha_{k,n}}$ is a sequence of bounded pre-visible real valued $d \times d$ random matrices, i.e., $\alpha_{k,n} \in \mathcal{F}_{k-1}$ and there exists finite number, say $A_{k,n}$, such that $\nom{\alpha_{k,n}} \leq A_{k,n}$. Suppose that for some $\delta, C >0$
\[
\E\bs{e^{\delta\nom{X_k}}\bigg|\mathcal{F}_{k-1}} \leq C, \ k \geq 1.
\]
Further assume that there exist constants $\gamma_1, \gamma_2 >0$, independent of $n$, so that $\sum_{k=1}^{n}A_{k,n} \leq \gamma_1$ and $\max_{1\leq k \leq n}A_{k,n} \leq \gamma_2\beta_n$, where ${\beta_n}$ is some positive sequence. Then for $\eta > 0$, there exists some constant $c > 0$ depending on $\delta, C, \gamma_1, \gamma_2$ such that \[
P(\nom{S_n} > \eta)
\leq
\begin{cases}
	2d^2e^{-\frac{c\eta^2}{d^3\beta_n}} & if ~ \eta \in \bigg(0,\frac{C\gamma_1d^{1.5}}{\delta}\bigg],\\
	 2d^2e^{-\frac{c\eta}{d^{1.5}\beta_n}} & otherwise.
\end{cases}
\]
\end{thm}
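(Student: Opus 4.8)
The plan is to prove the final statement, Theorem \ref{thm: mainthm}, by assembling the individual bounds established in Lemmas 1--6 through the triangle-inequality decomposition already recorded in equation (\ref{eqn: dev}). The starting point is the observation that, because $z(t_n) = \lambda(y(t_n)) = \lambda(y(n))$ by the definition $z(t) = \lambda(y(t))$ and the sampling convention $y_n = y(n)$, the quantity $\E[\|x_n - \lambda(y(n))\|]^{1/2}$ is identical to $\E[\|\bar{x}(t_n) - z(t_n)\|]^{1/2}$, since $\bar{x}(t_n) = x_n$ by the interpolation (\ref{eqn: inter}). Thus the main result is exactly a bound on the left-hand side of (\ref{eqn: dev}).

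First I would invoke the Alekseev decomposition derived in Section 2.3, which expresses $x_n - z(t_n)$ as the sum $A_n + (B_n - C_n) + C_n + D_n - E_n + \varrho_n$, and then apply the triangle inequality together with subadditivity of $x \mapsto x^{1/2}$ to obtain (\ref{eqn: dev}). Next I would substitute the six bounds: Lemma 1 gives $\E[\|D_n\|]^{1/2} \leq \cp \varepsilon^*/\beta$; Lemma 2 gives $\E[\|E_n\|]^{1/2} \leq K_{\gamma}L_{\lambda}\cp\ep/\beta$; Lemmas 4 and 5 give $\E[\|A_n\|]^{1/2} = O(a)$ and $\E[\|B_n - C_n\|]^{1/2} = O(a)$ respectively; and Lemma 6 gives $\E[\|C_n\|]^{1/2} = O(\max\{a^{1.5}d^{3.25}, a^{0.5}d^{2.5}\})$. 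The remaining term is the initial-condition contribution $\varrho_n = \Phi(t_n, t_0; \bar{x}(t_0), y_0)(\bar{x}(t_0) - z(t_0))$, which by assumption $(\dagger)$ is bounded in norm by $\cp e^{-\beta(t_n - t_0)}\|\bar{x}(t_0) - z(t_0)\|$; since $\bar{x}(t_0) = x_0$ and $z(t_0) = \lambda(y(0))$, this is the final displayed term.

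The assembly step then consists of noting that the two $O(a)$ terms from Lemmas 4 and 5 are dominated by (or absorbable into) the $O(\max\{a^{1.5}d^{3.25}, a^{0.5}d^{2.5}\})$ term from Lemma 6 when tracking the $a$-dependence, since $a^{0.5}d^{2.5}$ dominates $a$ for small $a$; this collapses the stepsize-dependent contributions into the single $O(\cdot)$ term appearing in (\ref{eqn: main}). The only genuine bookkeeping subtlety is ensuring the $O$-notation is handled consistently---specifically that the $O(a)$ contributions do not introduce a dimension dependence worse than what Lemma 6 already carries, and that the deterministic $\varepsilon^*$ and $\ep$ terms remain outside the $O(\cdot)$ since they reflect irreducible bias from the additive error and the target's drift rate.

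I expect no serious obstacle here, as this is fundamentally a synthesis step rather than a new argument; the heavy lifting was done in the preceding lemmas. The main point requiring care is the identification of $\E[\|\bar{x}(t_n) - z(t_n)\|]^{1/2}$ with the stated left-hand side via the interpolation and sampling conventions, and the correct attribution of the exponentially decaying initial-condition term $\varrho_n$ to assumption $(\dagger)$. The claim then follows immediately from the chain of substitutions into (\ref{eqn: dev}).
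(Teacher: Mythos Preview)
Your proposal addresses the wrong statement. The statement you were asked to prove is Theorem~\ref{thm:concineq}, the martingale concentration inequality in the Appendix, whose conclusion is a two-regime tail bound $P(\|S_n\| > \eta) \leq 2d^2 e^{-c\eta^2/(d^3\beta_n)}$ or $2d^2 e^{-c\eta/(d^{1.5}\beta_n)}$ for weighted sums of martingale differences with sub-exponential tails. Instead, your entire proposal is a proof of Theorem~\ref{thm: mainthm}, the main tracking-error bound $\E[\|x_n - \lambda(y(n))\|^2]^{1/2} \leq \cp\varepsilon^*/\beta + K_\gamma L_\lambda \cp \ep/\beta + O(\cdots) + \cp e^{-\beta(t_n-t_0)}\|x_0 - \lambda(y(0))\|$. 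These are distinct results: Theorem~\ref{thm:concineq} is a probabilistic inequality used as a tool inside the proof of Lemma~6, whereas Theorem~\ref{thm: mainthm} is the synthesis of Lemmas~1--6. Nothing in your proposal touches the content of Theorem~\ref{thm:concineq}.

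For the record, the paper does not prove Theorem~\ref{thm:concineq} at all; it is merely stated and attributed to \cite{thoppe2015concentration} (itself an adaptation of \cite{LiuWatbled}). So a correct response here would either reproduce or sketch the argument from those references---reducing to coordinates via a union bound (hence the $d^2$ prefactor and $d^{1.5}$, $d^3$ exponents), applying a scalar exponential-moment martingale inequality of Liu--Watbled type, and optimizing the Chernoff parameter separately in the sub-Gaussian and sub-exponential regimes---or simply note that the paper cites the result without proof. Your assembly argument for Theorem~\ref{thm: mainthm} is itself fine and matches what the paper does there, but it is not what was asked.
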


%\nocite{*}
%\bibliographystyle{plain} %IEEEtranSN or plain
%\bibliography{./references}

\end{document}